\newtheorem{theorem}{Theorem}
\begin{document}

\title{Self Organization in Computation \& Chemistry:\newline Return to AlChemy}

\author{Cole Mathis$^{1,2,*}$}
\author{Devansh Patel$^{1,3}$}
\author{Westley Weimer$^{4}$}
\author{Stephanie Forrest$^{1,2,3,5}$}

\address[1]{Biodesign Institute, Arizona State University, Tempe, AZ  85281}
\address[2]{School of Complex Adaptive Systems, Arizona State University, Tempe, AZ, 85281}
\address[3]{School of Computing and Augmented Intelligence, Arizona State University, Tempe, AZ 85281}
\address[4]{Electrical Engineering and Computer Science Department, University of Michigan, Ann Arbor, MI 48109}
\address[5]{Santa Fe Institute, Santa Fe, NM 87501}
\address[*]{cole.mathis@asu.edu}
\date{\today}

\maketitle

\begin{abstract}
How do complex adaptive systems, such as life, emerge from simple constituent parts? In the 1990s Walter Fontana and Leo Buss proposed a novel modeling approach to this question, based on a formal model of computation known as $\lambda$ calculus. The model demonstrated how simple rules, embedded in a combinatorially large space of possibilities, could yield complex, dynamically stable organizations, reminiscent of biochemical reaction networks. Here, we revisit this classic model, called AlChemy, which has been understudied over the past thirty years.  We reproduce the original results and study the robustness of those results using the greater computing resources available today.  Our analysis reveals several unanticipated features of the system, demonstrating a surprising mix of dynamical robustness and fragility. Specifically, we find that complex, stable organizations emerge more frequently than previously expected, that these organizations are robust against collapse into trivial fixed-points, but that these stable organizations cannot be easily combined into higher order entities. We also study the role played by the random generators used in the model, characterizing the initial distribution of objects produced by two random expression generators, and their consequences on the results. Finally, we provide a constructive proof that shows how an extension of the model, based on typed $\lambda$ calculus, \textcolor{black}{could simulate transitions between arbitrary states in any possible chemical reaction network, thus indicating a concrete connection between AlChemy and chemical reaction networks}. We conclude with a discussion of possible applications of AlChemy to self-organization in modern programming languages and quantitative approaches to the origin of life.
    
\end{abstract}


\section{Introduction}

The origin(s) of life remain an unresolved mystery in science, that is, how unconstrained reactive compounds were selected to form the organized biochemical networks that form the basis of  Darwinian lineages. In the early 1990s Walter Fontana and Leo Buss proposed a novel approach to this problem based on a formal model of computation known as the $\lambda$ calculus~\cite{fontana1993beyond, fontana1994arrival, fontana1994would}.  In a departure from the prevailing dynamical systems perspective, they modeled how novel objects might emerge through unconstrained interactions. This approach emphasized the primacy of chemical {\it reactivity} over predetermined {\it reactions} in their model. Their work was one of the first artificial chemistry models based on constructive processes, it focused on the production of new entities (construction) over time rather than motion through predetermined state spaces (dynamics)~\cite{dittrich2001artificial, kauffman1993origins}. The model, dubbed {\it AlChemy} (for Algorithmic Chemistry), was based on the construction and composition of $\lambda$ expressions---objects whose internal structure determines their relation and interaction with each other~\cite{fontana1996barrier}. 

The original analyses of AlChemy became foundational studies in understanding the origins of self-organized complexity~\cite{kauffman1993origins}. However, despite its large impact on several fields, from chemistry~\cite{quinkert1996aspects}, to economics~\cite{cederman2002endogenizing} and biology~\cite{wagner1996perspective}, the model itself has remained under-investigated. Two key reasons for this are: (i) skepticism about its applicability to chemical systems, or primitive living systems~\cite{szathmary1995classification}, and (ii) the technical challenge of systematically analyzing a system capable of such incredibly diverse behavior~\cite{krakauer2011challenges, fontana1996barrier, fontana1993beyond}.  Here we revive the AlChemy project using the original code base run on modern machines, considering their results in the context of a more systematic investigation using modern tools and computational resources. With nearly 30 years of progress in complex systems science and chemistry, we are also positioned to ask slightly different questions. In particular, we analyze the robustness of the original results by characterizing the stability of the organizations that form, how they respond to perturbation, and by performing a statistical survey of whether the identified organizations can be combined with each other.

The original code base for AlChemy has been faithfully hosted on the Santa Fe Institute's website, and is still accessible \cite{alchemy_site}, a remarkable achievement in open science and the only reason we were able to perform this reanalysis. Using the original code base, we analyze the statistical properties of the organizations produced by AlChemy rather than focusing on a few examples, as the original work did. We found results that are consistent, but not identical, to the original work, and there were a number of surprises. These differences suggested questions about the sensitivity of AlChemy to different methods of generating the initial soup of random $\lambda$ expressions.  \textcolor{black}{Finally, we return to the question of whether or not AlChemy is an appropriate model of chemistry, and show by construction that there exists a correspondence between the state transitions of any Chemical Reaction Network and the reaction sequences of a model like AlChemy. We use the term \textit{simulate} to refer to the correspondence between the states and state transitions of two different systems, without considering intermediate configurations or rates of reaction. The paper concludes by discussing how, after thirty years, AlChemy offers additional insight across several domains, including origin(s) of life, astrobiology, biology, and computer science.} 

\section{AlChemy}

Fontana \& Buss introduced AlChemy to resolve what they called the ``existence problem of population genetics''~\cite{fontana1994arrival, fontana1994would}. This problem is related to the origin of life, and other major transitions in the organization of living material~\cite{szathmary1995major}. By {\it organization}, we mean a system with interacting components, which is stable through time, and that stability is not due to the material stability of any individual component but instead is generated dynamically by the relationship between itself and the other component parts of the system. Examples of organizations include living cells, autocatalytic chemical reaction networks, as well as ecosystems, economic firms and the biosphere itself.  It is difficult to determine which features of a stable organization are necessary and which are contingent. To claim that a feature is contingent requires demonstrating an alternative form of the organization without that feature---or at least with the feature instantiated by different means~\cite{fontana1994would}. In the context of the origin of life, this is a problem because we can characterize the dynamics of (bio)chemical reaction networks as we know them, but these dynamics themselves are incomplete explanations for the existence of the underlying network. We are less interested in ``what are the dynamical properties of this reaction network?'' than we are in ``why does biology use these reactions, instead of the vast ensemble of alternatives?~\cite{smith2004universality}'' AlChemy was designed to address the latter question, by providing a model to study the emergence of many stable organizations and compare them~\cite{fontana1994would}.

AlChemy abstracts away the details of real chemistry and focuses on three key features: (i) a vast combinatorial space of objects that can be constructed from a finite set of basic building blocks, (ii) interactions between objects lead to the production of new objects, and (iii) the outcome of interactions is determined completely by the internal structure of the objects involved~\cite{fontana1994arrival}.  These abstract properties were implemented using the formal model of computation known as the $\lambda$ {\it calculus}~\cite{fontana1993beyond, fontana1994arrival, fontana1994would}. In computer science, $\lambda$ calculus played an important role in the development of the theory of computing, and arose around the same time as Turing machines \cite{church1985calculi}. It is also the basis of functional programming languages, most famously, Lisp.

\subsection{The $\lambda$ calculus}
We next give an informal description of $\lambda$-calculus.  A more formally-inclined reader may wish to consult a rigorous  treatment, e.g.,~\cite{barendregt1984lambda}, and the uninterested reader may decide to proceed knowing only that the $\lambda$-calculus specifies a set syntactic expressions, rules for combining and transforming them (as in molecular reactions), and rules for simplifying expressions.

In $\lambda$ calculus, objects are defined as $\lambda$ {\it expressions}.  A $\lambda$ expression takes one of the following forms:
\begin{enumerate}
    \item A single variable, $x$, chosen from some finite set of symbols, $\Sigma$.
    \item A lambda abstraction, $\lambda x\ .\ E$, where $x$ is a variable and $E$ is an expression. This form describes a function that binds $x$ as an argument to its body, $E$.
    \item An application $(E_1)\ E_2$, where both $E_1$ and $E_2$ are expressions. This form describes the composition of objects. If $E_1$ is a function, then $E_2$ is the argument to that function.
\end{enumerate}
Intuitively, we have a system of variables, function definitions (called abstractions), and function applications (in which arguments are bound to variables or other functions). However, expressions can contain both \textit{bound} and \textit{free} variables. A bound variable is simply one that is associated with a $\lambda$ abstraction, and all other variables are free. For example, in the expression $\lambda x. ~x ~ y$, $x$ is bound, and $y$ is free. 

In addition to the three basic forms described above, $\lambda$ calculus has two substitution rules for simplifying expressions as far as possible, referred to as a {\it normal form}. These rules are the `calculus' component of {\it $\lambda$ calculus} and are known as $\beta$-reduction and $\alpha$-substitution. The application of these rules to a $\lambda$ expression is conceptually similar to simplifying a mathematical equation.

An expression is $\beta$-reducible if it has the form 
$(\lambda x.~E_1)~E_2$. In such an expression, $\beta$-reduction first substitutes $E_2$ for each $x$ in $E_1$ that is bound to $\lambda x$, and then drops the $\lambda x$.  $\alpha$-substitution uniformly renames a variable in a $\lambda$ expression, similar to renaming a variable in an algebraic expression. When we perform an $\alpha$-substitution of a binding variable $x$ in an expression $E$ by another variable $y$, we substitute the binding $\lambda x$ with $\lambda y$ and all instances of $x$ bound by $\lambda x$ in $E$ by $y$. If $x$ is free, we substitute only the free $x$ by $y$.

We consider some simple examples to illustrate the basics of $\lambda$ calculus, particularly those most relevant to AlChemy. Consider two $\lambda$ expressions, (i) $\lambda x.~x~w$, and (ii) $\lambda y.~\lambda z.~y$. The composition (application) of (i) and (ii) gives the expression (iii) $(\lambda x.~x~w)~\lambda y.~\lambda z.~y$. This expression is of the form $(\lambda x.~E_1)~E_2$, where $E_1 = x~w$ and $E_2 = \lambda y.~\lambda z.~y$. We can use one $\beta$-reduction here, substituting for $x$ by $E_2$ in $E_1$ and dropping the $\lambda x$. This gives $E_2~w = (\lambda y.~\lambda z.~y)~w$.

There is a well-defined computational procedure for generating the normal form of a $\lambda$ expression, if one exists. Isolated variables (expressions with top-level abstractions), and applications without an abstraction on the left-hand side do not admit $\beta$-reduction. Such expressions are already in normal form. AlChemy requires that every expression be in normal form, and so these expressions tend to make up much of the population. However, not every expression has a normal form, because certain expressions may never reduce to a terminating state (this follows from $\lambda$-calculus being Turing complete). An example of such an expression is the $\Omega$ combinator: $(\lambda x.~x~x)(\lambda x.~x~x)$, which reduces to itself in one $\beta$-reduction. We can see this by applying a $\beta$-reduction, observing that this expression takes the form $(\lambda x.~E_1)~E_2$ with $E_1 = x~x$ and $E_2 = (\lambda x.~x~x)$. The substitution of each $x$ in $E_1$ with $E_2$ yields $E_2 ~ E_2 = (\lambda x.~x~x)(\lambda x.~x~x)$. Additional $\beta$-reductions produce the same expression, and thus this expression has no normal form. The $\Omega$ combinator is one of many non-terminating expressions. Worse yet, for any given expression it is computationally undecidable to determine if its reduction will terminate in normal form.  AlChemy takes a pragmatic approach to this problem by attempting to reduce an expression a given number of times before giving up.

Correctly reducing a $\lambda$ expression to normal form with the same meaning may also require $\alpha$-substitution to avoid incorrectly overloading (``capturing'') variable labels. For example, consider the expressions (i) $\lambda x.~\lambda y.~x$ and (ii) $\lambda x.~x~y$. Composing (i) and (ii) gives (iii) $(\lambda x.~\lambda y.~x~y)\lambda x.~x~y$. This expression is reducible: it takes the form $(\lambda x.~E_1)~E_2$, with $E_1 = \lambda y.~x~y$ and $E_2 = \lambda x.~x~y$. If we attempt to $\beta$-reduce this expression directly, we get $\lambda y.~(\lambda x.~x~y)~y$. However, the isolated and free $y$ in $E_2$ is now bound incorrectly (captured) by the $\lambda y$ in $E_1$.  That is, the meaning of the argument $E_2$ is semantically different {\it within} the expression after the incorrect reduction. The meaning of $E_2$ within $E_1$ can be preserved by renaming the conflicting $\lambda y$ in $E_1$ to any fresh (unused) name before reduction, here we choose $z$. The expression $E_1$ is $\alpha$-equivalent to $E_1' = \lambda z.~x~z$, and the composition of $E_1'$ and $E_2$ yields $\lambda z.~(\lambda x.~x~y)~z$. This preserves the free $y$, and does not change the meaning of the argument within $E_1$ after reduction.

\subsection{$\lambda$ expressions in AlChemy}
Using $\lambda$ calculus as the basis of the model, Fontana \& Buss generated what could be described as a ``Turing Gas''--- a collection of random expressions that ``collide,'' where a collision causes one expression to be applied to another, and the resulting expression is reduced to normal form (e.g. $A + B \to A + B + C $, where $C = (A)B $) ~\cite{fontana1994arrival, kauffman1993origins}. This constructive process is interpreted as a catalytic reaction, where every object in the system can serve as a reactant in some reactions, and as a catalyst in others~\cite{szathmary1995classification}. Notice that the composition of $\lambda$ expressions is not commutative ($(A)B \neq (B)A $), which means that for each pair of $\lambda$ expressions, there are two possible reactions, which are chosen with equal probability in AlChemy. AlChemy does not rely on or suggest an inherent ``computational'' nature of chemical systems. Rather, the relevant features of $\lambda$ calculus are (i) an infinite set of possible objects (expressions) that can be generated from a small set of building blocks (variables), and (ii) the ability to generate new objects according to simple interactions between existing objects~\cite{fontana1994arrival}. 

AlChemy simulations proceed as follows: i) initialize the system with a set of $N$ random $\lambda$ expressions, (ii) perform a collision, by picking two expressions at random, applying the first to the second and reducing to normal form, (iii) add the newly generated expression to the system, (iv) remove a random expression to keep the total number of expressions constant, and (v) repeat steps ii--iv until $T$ collisions have been performed.  Time is measured in units of collisions, i.e., each collision corresponds to one time-step. Because $\lambda$ expressions are selected according to their abundance in the soup, this process obeys the principle of mass action, meaning that the rate of reactions is proportional to the product of the reactant concentrations. \textcolor{black}{ The model is of interest because when one starts with distinct, arbitrary, and randomly generated expressions, the procedure does not produce more distinct, arbitrary, or otherwise random expressions. Instead, the simulations generate a collection of $\lambda$ expressions which collectively reproduce each other, which have interrelationships that are not predictable by the rules of $\lambda$ calculus alone, analogous to the relationship between biochemistry and standard physics~\cite{fontana1994would, sharma2023assembly}}.

To implement the model, two additional features are required: termination and filtering. The reduction of an expression is not guaranteed to terminate, and there is no way to know ahead of time which ones these are (a manifestation of the famous Halting problem). Therefore, in AlChemy, when two expressions collide, it is possible that the resulting product may never reduce to normal form. This infinite regress is prevented using {\it pragmatic reduction}. A finite limit is placed on how many times each expression is reduced. If the reduction does not terminate by the limit, the reaction is deemed {\it elastic}, and the original expressions are returned to the simulation. Finally, to explore different boundary conditions on AlChemy's dynamics, the original authors included an option of excluding certain reactions based on pattern matching, referred to as {\it syntactic filters}. For example, the original work includes simulations in which copy actions are excluded by identifying reactions that produce an explicit copy action (of the form A + B $\rightarrow$ 2A + B ) by comparing the products and reactants and labeling any such reaction as ``elastic.'' Syntactic filtering allows AlChemy dynamics to be modified by removing entire classes of reactions.  The impact of such filters has not been studied carefully, except the case just mentioned where copy actions are eliminated. Simulations can be run with our without syntactic filters imposed.

\section{Prior Work}
 
The original Fontana \& Buss paper was a landmark study in constructive dynamical systems and inspired subsequent work across several domains~\cite{dittrich2001artificial, quinkert1996aspects, cederman2002endogenizing}. In the field of artificial life some authors proposed modified versions of AlChemy based on combinator logic (rather than simple $\lambda$ calculus)~\cite{di2000less, virgoopen, kruszewski2022emergence}, producing systems admitting reversible reactions or constraints such as conservation laws analogous to conservation of mass.  Those simulations found results that are qualitatively similar to those originally reported by Fontana \& Buss~\cite{di2000less}, though with additional rules imposed on the system. We did not pursue combinator logic here, in part because we were interested in exploring and reevaluating the original AlChemy simulations with the original code base. One of the most interesting aspects of using $\lambda$ calculus as a constructive model is that it allows for open-ended evolutionary dynamics (up to some practical computational limit). Inspired by this, Masumoto \& Ikegami used $\lambda$ calculus and genetic programming to explore the evolution of strategies in game theory, modeling agent strategies and the game itself as $\lambda$ expressions~\cite{masumoto2001lambda}.  Similarly, rule-based modeling approaches have been adopted in the Kappa language to more closely mimic biochemical systems and gene regulatory systems~\cite{boutillier2018kappa}.

\subsection{Results from the original AlChemy papers}
In their original work Fontana \& Buss define a hierarchy of functional organizations they observed in AlChemy~\cite{fontana1994would, fontana1994arrival}, identifying three distinct organizational levels, known as `L0', `L1' and `L2' organizations. The simplest, L0, is described by the authors as the typical fixed point of a simulation. It is characterized by the dominance of simple copy functions which emerge easily and quickly take over the system. This occurs because the function of copying any input is trivial in $\lambda$ calculus, the expression $\lambda x_1. x_1$, will copy any function it is applied to, and thus highly likely to emerge in most random samples of $\lambda$ expressions. It is easy to detect L0 organizations in a simulation because when they emerge the number of unique expressions in the system tends (down) toward one (because copy actions make more of the copier which dilutes out other species). 

The next level of the hierarchy is L1 organizations. The original experiments generated L1 organizations when syntactic filters were added to the system to prevent all simple copy actions, i.e., those that characterize L0 organization.  With this added constraint, L1 organizations emerged.  They are characterized by autocatalytic sets of expressions, in which each member of the set can be produced by the interaction of other members of the set, even though no member reproduces (copies) itself directly~\cite{kauffman1993origins}. L1 organizations can have much richer dynamical properties than L0 organizations. By definition, the number of unique expressions contained in the organization must be greater than 1, a clear signature that differentiates them from L0 organizations. L1 organizations are robust to perturbation; in fact they are identified in the original work by running the simulation, and repeatedly perturbing the system through the addition of random expressions. In addition, L1 organizations are not necessarily closed under interaction; at any given time in an L1 organization the composition of two expressions could yield a new expression not currently in the system. Over time, however, these new expressions will be diluted out of the system because they are not being generated consistently. Thus, the long term evolution of the system of expressions tends to converge to some stable distribution with finite fluctuations around it. Given the size and diversity of the combinatorial space of expressions, there is no guarantee that all (or even most) L1 organizations will converge to the same distribution. In the original work, the authors give examples of several distinct L1 organizations, each with its own unique internal structure and logic. The rules of these systems are consistent with, but not explained by, the rules of $\lambda$ calculus itself, just as the apparent rules governing living systems are consistent with but unexplained by the laws of physics and chemistry as we know them. 

The highest level of organization characterized in the original work are `L2' organizations. These are composites of L1 organizations and can be discovered either spontaneously using the same constraints as the L1 organizations, or by manually composing two previously discovered L1 systems. L2 organizations are characterized by the existence of two or more L1 organizations and additional expressions (called `glue' in the original work). The `glue' expressions could not exist without at least one of the L1 organizations and are produced by composing functions from different organizations. L2 organizations may be difficult to identify when they emerge spontaneously because deciding whether an organization can be split into two distinct stable organizations is a difficult problem to solve without resorting to trial and error.

To avoid confusion we distinguish between ``organizations'' and ``simulations.'' When we say an `L1 simulation'' we mean the simulation parameters described by the original authors that generated ``L1 organizations.'' So, for example, while our analysis shows that ``L0 simulations'' can produce complex organizations, we will continue to refer to them as ``L0 simulations,'' without commenting on where the results belong in the organizational hierarchy.

\section{Experimental Results} 

This section first details how we recompiled the original code in a modern computing environment, and then reports the results of our investigation into the organizational hierarchy described by Fontana \& Buss. We were largely able to reproduce their key simulation results, with some interesting elaborations on the statistics of the originally reported behaviors. This was possible only because the original authors archived their implementation and hosted it publicly. This provided a unique opportunity to revisit a landmark study nearly three decades later with the advantage of modern computing resources. 

Much of the original AlChemy project was written in an early version of C, and the code has several incompatibilities with modern compilers.  Therefore, it was not possible to compile the exact original source with modern compilers such as \verb|gcc 4xx|. Accordingly, we made a small number of minor modifications, which to our knowledge do not change the semantics or behavior of the original program.

The biggest change was that the code could not be compiled with the version of \verb|camllight| distributed with the code.  Instead, we adopted a slightly different version of \verb|camllight|~\cite{dougs_camllight}. Beyond this only minor modifications were required, which primarily involved changing \verb|include| statements, and making minor changes to function names. Specifically we had to to include the headers \verb|stdlib.h|,  \verb|string.h|,  \verb|time.h|,  in several files. We renamed one function in \verb|LambdaReactor/main.c| originally called \verb|select(...)| to \verb|peep(...)| which avoids a naming conflict. This function is called only once in that file and nowhere else. In one function (\verb|Original_reaction()| in \verb|LambdaReactor/interact.c| we updated the memory allocation call to use \verb|calloc()|, rather than a function in the original code base, \verb|space()|, which was defined in \verb|LambdaReactor/utilities.c|. With these changes, the original source compiled successfully in our environment and are available at ~\cite{our_github}, which includes a docker container, and Python scripts to run the code and analyze outputs.

The dynamics of AlChemy are easy to infer based only on a description of the simulation.  As an example, Figure~\ref{fig:example} shows an organization that emerged in a simulation using the recompiled code.  The simulation was initialized with 1000 random expressions generated with a maximum depth of 7 (see Section~\ref{sec:dynamical-consequences}), pragmatic reduction was set to a maximum of 500 reduction steps, and copy actions were prohibited. The simulation ran for a total of 500k collisions. This simulation was part of our statistical survey, selected after the fact as a useful example because it contained only four expressions at the end of the run. It illustrates the simple L1 organizations identified by Fontana and Buss and described in Figure 2 of \cite{fontana1994arrival}. Figure \ref{fig:example}A shows the actual $\lambda$ expressions, which we label with $a,b,c, \& d$. Figure \ref{fig:example}B, shows the time-series of the organization. The horizontal axis indicates the collision number of the simulation (time), and the vertical axis shows the relative abundance of each unique expression (the vertical placement of each expression is arbitrary---only its relative area on the plot is relevant. The simulation is initialized with 1000 random expressions, and each unique expression is assigned a gray-scale color, except for the four expressions in the final stable organization (i.e., $a,b,c \& d$). The reaction rules implied by the expressions are shown in Figure \ref{fig:example}, and its network representation is shown in \ref{fig:example}D. There are only 12 reaction rules, instead of the 16 possible combinations of four expressions, because four combinations would  produce copy actions and be filtered.
This example shows how AlChemy simulations can produce random expressions that repeatedly interact to produce a stable, self-consistent organization with its own internal logic.

\begin{figure}[h!]
\centering
\centerline{\includegraphics[width=\textwidth]{"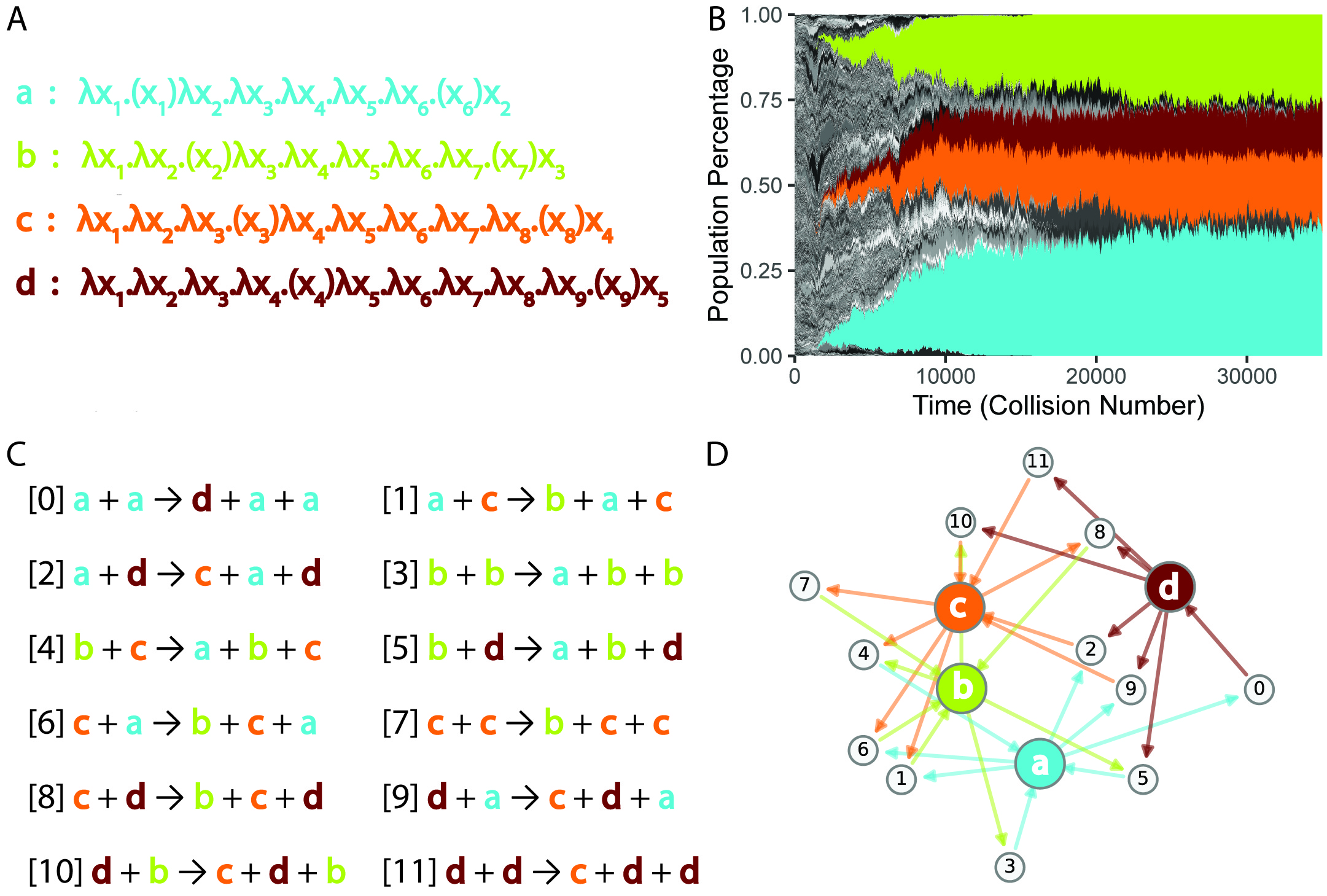"}}
\caption{Example AlChemy simulation and the organization it produced. (A) The four $\lambda$ expressions remaining at the end of a single simulation run, which constitute the L1 \textit{organization}. (B) Timeseries of the simulated run where the vertical axis represents the fraction of the population occupied by any given expression. Each expression is assigned a grey-scale value, except the four winners shown in (A). After a few hundred collisions all four expressions in the final organization had emerged, and by $\approx$ 35k collisions they were the only expressions remaining in the simulation. (C) The reaction rules that correspond to the final organization, showing that it is closed under interaction between any two expressions. (D) Network representation of the reaction rules from (C).}
\label{fig:example}
\end{figure}

\subsection{L0 Organizations}

\begin{figure}[h!]
\centering
\centerline{\includegraphics[width=\textwidth]{"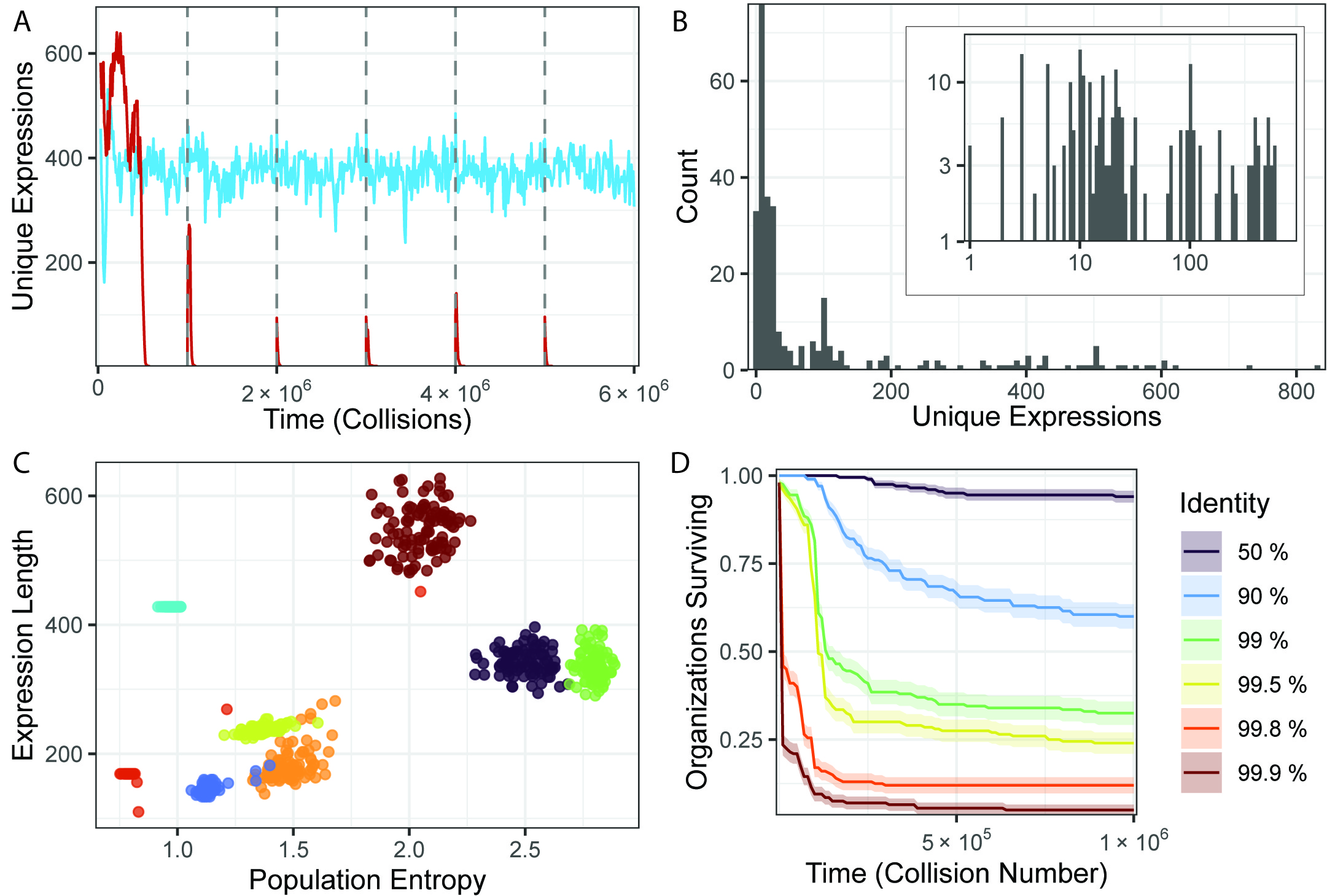"}}
\caption{L0 Simulations. (A) Time series from two different representative simulation runs (red and blue), showing the number of unique expressions through the runs. In both cases the maximum number of objects was set to 1000 and a total of $6\times10^6$ collisions were performed. Every $10^6$ collisions (grey dashed lines) we introduced a perturbation by adding 100 random expressions to the simulation. (B) The distribution of unique expressions across 1000 different simulations after the fifth perturbation (median across the previous $10^6$ timesteps of the simulation), inset shows the same data on log-log scales. The number of expressions at steady state seems to vary across orders of magnitude and is heterogeneously distributed. (C) The organizations stabilize with fixed macroscopic parameters, even though the expressions themselves are continually changing, the average expression length and population entropy (a measure of diversity) are shown for different simulations. Each color is a different simulation with the identical run parameters, each point represents a snapshot of the simulation at different time points. (D) Survival Rate over time of different organizations when they are perturbed by replacing $p$\% of the expressions with the identity function. Different colors correspond to different values of $p$.}
\label{fig:L0sims}
\end{figure}

Recall that L0 organizations are expected to emerge in the least constrained version of \textit{AlChemy}, as they are dominated by trivial copy actions (e.g., the identity function $\lambda x_1.x_1$, applied to itself). Fontana \& Buss detected these by running the simulation without syntactic filters, and iteratively perturbing the system by adding new random expressions after the simulation ran for a long time. When the number of unique expressions converged to one or just a few expressions, the simulation invariably contained identity functions that copy themselves through application to other identity functions. This trivial fixed point (e.g. all $\lambda x_1.x_1$) exemplifies the L0 organizations. We recreated this experiment using wrapper scripts that call the original code and manipulate the input/output files (available online~\cite{our_github}). Specifically we simulated 1000 unique expressions for over 1,000,000 collisions, then we added 100 random expressions to the system and ran it for another 1,000,000 collisions, repeating this process a total of five times. See section \ref{sec:random-exprs} for details on how random expressions are generated. Two example time-series from this experiment are shown in Figure \ref{fig:L0sims}(A), one of which illustrates the expected behavior (shown in red). Here the dashed lines indicate the times at which perturbations were introduced. The two colors represent two different runs of the simulation with different random seeds. 

In contrast with the original results, we found some L0 simulations that do not end in trivial fixed points, dominated by copy functions. In some of our simulations more complex organizations emerged, which are characterized by a large number (10--100s) of distinct expressions and robustness to repeated perturbations. These organizations are consistent with the description of L1 organizations and possibly even L2 organizations. The blue line in Figure \ref{fig:L0sims} (A) shows an example. This simulation had a steady state of about 380 unique expressions, and it was unaffected by perturbations. The distribution of steady-state unique expression counts is shown in Figure \ref{fig:L0sims} (B). Many simulations end with only a few species, but some simulations end with 10s or 100s of unique expressions. This was unexpected given that these systems should not be robust to the addition of copy expressions. These results are interesting for the same reason the `L1' organizations in the original work are interesting: they demonstrate that stable organizations can emerge with complex internal structure, from seemingly random inputs and minimal rule sets. The only constraints imposed on this system are the composition rules of $\lambda$ calculus and an initial set of expressions. From these the simulations produce an organization defined by a network of mutually dependent interactions, which were not manually encoded in the initial conditions. This type of self-organized complexity is exactly the type of phenomena we would like to be able to observe in chemical systems~\cite{surman2019environmental, cronin2016beyond}. 

As in the original work, we detect stable organizations that contain enormous diversity. We can characterize this diversity by measuring average properties of an entire population and comparing them against each other, for example the average length of expressions and the population entropy. By population entropy, we mean the diversity of the expressions in a system, which we can calculate by making a species distribution curve and calculating its entropy. Figure \ref{fig:L0sims}(C) shows some of this diversity. Each color in this panel indicates a different simulation run with the same parameters, and each point is a single snapshot (point in time) in the simulation (after all perturbations have been performed, e.g., the previous 100 snapshots). The horizontal axis shows the population entropy (diversity) of the simulation at the given snapshot. If each of the 1000 expressions were unique, the entropy would be 3, and if it contained only copies of a single expression, the entropy would be 0. The vertical axis shows the average length of the expressions in the simulation at that time. These two parameters do not mutually constrain each other in the dynamics. However it is interesting that once an organization has been established, its own internal dynamics determine the diversity of the population, the length of the expressions, and the variation of those parameters through time. For example, the orange and blue points show similar values of population entropy and expression length, but the blue simulation has less variation than the green. Some stable organizations are tightly constrained by their internal dynamics, while others enable wide variation in their averaged  properties. 

Given the surprising emergence of larger stable organizations, we wondered if they were in fact robust to copy functions, or if copy functions had never emerged in the system, and when introduced if they would lead to the destruction of the stable organizations. To test this we initialized 200 simulations with the $\lambda$ expressions from the end state of 50 different simulations (e.g. a random subset of those shown in Figure \ref{fig:L0sims}B). We used each such end state to generate four initial populations for further simulation. In each initial population we replaced a random $p$ percentage of the expressions with the identity function $\lambda x_1. x_1$, and then ran the simulation for an additional $10^6$ collisions. At each step we recorded whether the organization had ``survived,'' by evaluating whether there were any expressions other than the identity expression remaining in the system (in this case `dying' means collapse into the state of only $\lambda x_1.x_1$). Figure \ref{fig:L0sims}(D) shows the results of this experiment, for different values of $p$. Surprisingly, the organizations required very large perturbations to be destroyed. Even replacing 50\% of the expressions with the identity function did cause collapse, and 90\% replacement destroyed only some, but not most, of the organizations. Even replacing 99.9\% of the expressions (e.g. leaving only a {\it single} expression other than the identity), led to relatively long transients before the organizations collapsed. 

One reason for this surprising robustness is the non-commutative nature of AlChemy. The application of $\lambda x.x$ to any function will return the function itself, but the application of any function $f$ to $\lambda x.x$ will not necessarily return $f$ or $\lambda x.x$. This feature, combined with the fact that all reaction are ``catalytic'' ($A + B \rightarrow A + B + C$) enables relatively long transients, even when the majority of interactions are copy actions. Importantly, when relatively large perturbations are applied to organizations (as in Figure \ref{fig:L0sims}D), the organization is often fundamentally altered, bearing little resemblance to the organization before the perturbation. These results demonstrate that the most likely fixed point (only $\lambda x.x$) is difficult to access using expressions produced by a long AlChemy simulation.   

\subsection{L1 Organizations}

\begin{figure}[h!]
\centering
\centerline{\includegraphics[width=\textwidth]{"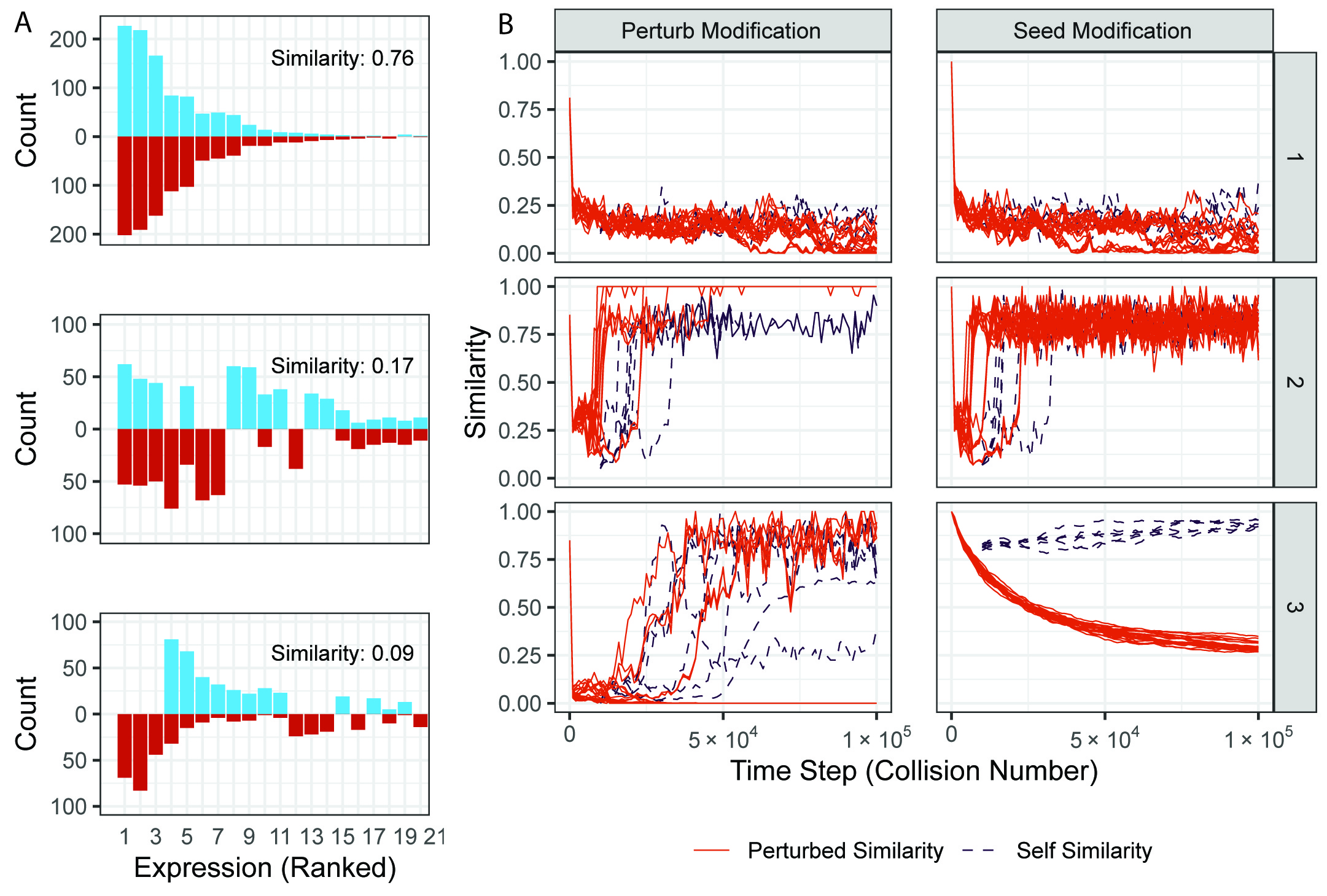"}}
\caption{L1 Simulations and Similarity over time. (A) Abundance Distributions illustrating pairs of similar organizations. Each plot shows two different pairs (Blue and Red); each bar corresponds to one expression that is a component of each member of the pair; the ordering of the expressions is determined by the sum of their abundance in both members of the pair. The vertical axis indicates the expression copy number with the red values inverted to facilitate comparison. High symmetry between blue and red indicates similar copy number, low symmetry indicates different copy numbers between the organizations. (B) Stability analysis of four L1 organizations that emerged in simulation. Each organization was initialized with a modification, either ``perturb" (10 random expressions added), or ``seed" (different random number seed only). The similarity between the modified and original simulations is shown in solid orange lines. The similarity of the original simulation to \textit{itself} at a previous time-step (500 collisions earlier) is used as a control, and it is shown in the purple dashed lines. Each panel shows a different input organization, selected to highlight the diversity of possible outcomes.}  

\label{fig:L1sims}
\end{figure} 

Fontana \& Buss used syntactic filters to eliminate the effect of copy functions (like $\lambda x.x$) (as described above). Although our analysis shows that these functions are unlikely to dominate the system in general, we followed the procedures outlined in the original work. We ran L1 simulations following the protocol described above for `L0' except that we imposed syntactic filters to bar copy-actions. We refer to the resulting organizations as ``L1 organizations'' even though they appear similar to many organizations we found with L0 simulations. Our simulation results are consistent with those described in the earlier work. Given the robustness of the organizations we observed with L0 simulations (figure \ref{fig:L0sims}), for L1 we focused our attention on statistically analyzing the robustness of discovered organizations to random perturbation, instead of the targeted perturbations shown in Figure \ref{fig:L0sims} D). 

We measure the similarity of $\lambda$ expressions at any given point in time using the Jaccard Index~\cite{levandowsky1971distance} as a measure of similarity between two sets (organizations).  The Jaccard Index is simply the ratio between the intersection and the union of two sets. When the Jaccard index is close to $1.0$ the two sets contain nearly identical expressions, when it is close to $0.0$ they are nearly disjoint. We note a potential weakness of the Jaccard Index is that it does not account for the relative abundance of expressions in a population. Thus, in the following, whenever we refer to the similarity between $A$ and $B$, we mean the Jaccard Index of $A$ and $B$.  

We consider the copy frequency (count) in Figure \ref{fig:L1sims}(A).  We ask, for two organizations that have a given similarity at the same time step, how are the expression counts distributed (i.e., how many copies are there of each different expression)? Figure \ref{fig:L1sims}(A), plots the distribution for three different examples, each illustrating a different level of similarity. Each plot shows two different organizations (Blue and Red) each bar indicates a different expression, and the size of the bar indicates the copy number of that expression in the organization (red is inverted to show a clear comparison to blue). The bars are sorted according to the sum of the copy number in the red and blue organization. When similarity is high, the expressions in both organizations are nearly the same, and we find that corresponds to similar copy numbers for each expression (even though the Jaccard index does not track this explicitly). When similarity is low, the number of expressions common to both organizations is low, and the copy number of the shared expressions are often different. This suggests that when stable organizations emerge, their relational structure is determined by the expressions they contain and copy number is largely determined by that structure. A priori, this need not be true, it is not difficult to construct chemical reaction networks that exhibit bi-stability, which would mean a single set of compounds and reactions could generate two distinct sets of concentrations. But, this does not appear to be the case for the organizations found here. For organizations discovered by the L1 simulations, bi- or multi-stability appears rare. Although bi-stable and oscillatory CRNs are ubiquitous in biochemical reactions, and trivial to produce theoretically, empirical systems exhibiting bi-stability are relatively rare, and often must be engineered into a system through fine control \cite{maity2019chemically, semenov2016autocatalytic}. 

We next considered the stability of the organizations through time and how they respond to perturbation. We measured the similarity of a simulation state to itself at an earlier time, comparing the state at time step $t$, and $t-n$, for several different values for $n$. We found different but similar trends for values of $n$ between $100$ and $2000$ collisions, and we show results for $n=500$ in Figure \ref{fig:L1sims}B (dashed purple lines). In general, even when the number of unique expressions in the simulation remains stable over time, the internal structure of the organization is not as stable---the expressions are continually changing over time, to varying degrees.

To study this turnover of expressions, we investigated both perturbations to the expressions themselves and changes to the simulation procedure (how collisions are selected).  We first make small changes to the state of the system (e.g., $x \rightarrow x + \epsilon$), which involved removing 10 randomly selected expressions, and adding 10 randomly generated expressions. In deterministic dynamical systems this is what many practitioners mean by perturbation~\cite{ellner1995chaos}, and we refer to this as ``perturb.'' Second, we consider a modification that changes the order in which expressions collide~\cite{ellner1995chaos}, referred to as ``seed.'' In AlChemy, collision order is determined by the pseudo-random number generator, and in this modification we ran the simulation with a different integer as the seed for the pseudo-random number generator.

We randomly selected different L1 organizations from the simulations and performed each modification independently, which resulted in two new organizations that we ran for $10^5$ additional collisions. For each resulting simulation we measured the similarity between the modified simulation and the unmodified simulation, as well as between the modified simulation and itself $500$ collisions in the past. We repeated this experiment seven times for each organization. We observed a diversity of possible outcomes, we show three representative organizations in Figure \ref{fig:L1sims} (B). Each panel shows the similarity of a modified system to its unmodified counterpart, two panels for each of the three organizations, one for each type of modification.

In organization 1 both types of modifications often led to different organizations, sometimes not overlapping at all with the original, as does the self-similarity control, suggesting that the original system was not truly stable. In organization 2 both types of modifications eventually produce an identical or nearly identical organization to the original, as does the self-similarity control. Finally, organization 4 demonstrates that a single organization can produce different responses depending on the type of modification. In some cases (but not all) ``perturb'' led to a new organization, while in others the organization eventually returned to its original state (e.g. similarity $\sim$ 1). In all cases, however, the ``seed" modification caused systems to slowly drift apart into distinct states (low similarity), and these states were stable through time (high self-similarity). These results show that the stability of `L1 organizations' varies widely, both through time and in robustness to external perturbations. In some cases (as in organization 2) they are highly robust, while in others (organization 1), they are highly sensitive. These differences suggest that the organizations we found respond differently to environmental changes. This makes them interesting targets for a Darwinian process, because we know, e.g., that different organizations will have different capacities to respond to selective conditions. A promising future direction for this work would include competition and selection among different organizations. 

\subsection{L2 Organizations}
\begin{figure}[h!]
\centering
\centerline{\includegraphics[width=18cm]{"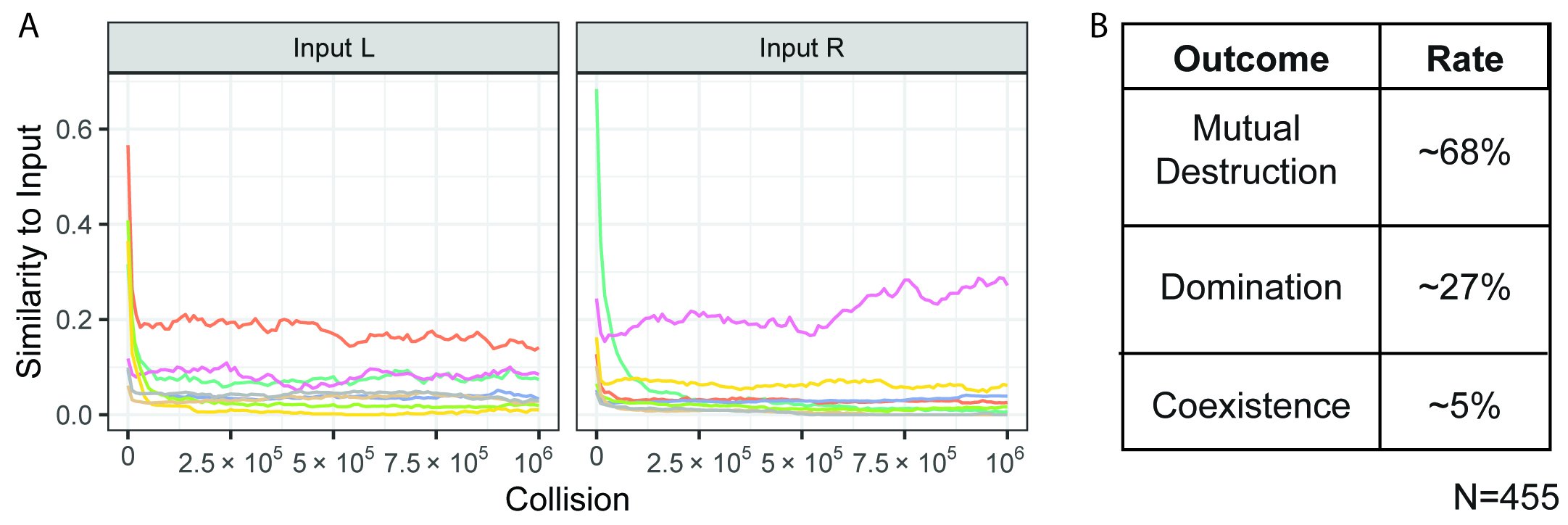"}}
\caption{L2 Simulations. Composite organizations were formed by combining two different L1 organizations ($L$ and $R$) into the same simulation and running the combination for $10^6$ collisions. (A) time-series of the similarity of the composite to the inputs; color corresponds to different choices of $L$ and $R$. (B) three possible outcomes: (i) Mutual Destruction (the composite retains almost no similarity to either input); (ii) Dominance (one input eventually dominates the other, and the final organization reverts to the dominant input); (iii) Coexistence (the final organization retains  significant similarity to both inputs). The relative frequency of these outcomes is summarized in the table.}
\label{fig:L2sims}
\end{figure}

The highest level described by Fontana \& Buss are L2 organizations, which are composites of stable L1 organizations. In the original work, Fontana \& Buss identified L2 organizations in two different ways, first by combining two independent L1 organizations found in different simulations, and then by identifying an L1 simulation in which two separable organizations emerged. We did not attempt the latter. It is possible that the results we reported in the previous sections were in fact L2 organizations with separable internal structure. Instead, we interrogated whether combinations of our identified L1 organizations could coexist. We started with two L1 simulations ($L$ and $R$) each with $1000$ expressions remaining at the end of a series of over $10^6$ collisions and five perturbations, and combined them into a single simulation with the maximum number of objects set to $2500$.  We then ran that simulation for $10^6$ more collisions and measured the similarity of the system to the original inputs ($L$ and $R$). Figure \ref{fig:L2sims}(A) shows results for eight representative runs, each of which was initialized with different choices of $L$ and $R$. The left two panels show similarity to the $L$ and $R$ inputs over the simulations, and the colors in each panel correspond to the same simulation. In many cases, one input organization dominated the other, e.g., the orange organization, which retains its similarity to the original $L$ input.  Likewise, it retains $0$ similarity to the $R$ input. We partitioned the outcome of these simulations into three coarse categories: (i) Dominance---the organization retains non-zero similarity to one input but not the other, (ii) Coexistence---the organization retains an average similarity $>0.1$ to both inputs, and (iii) Mutual Destruction---the organization retains $<0.1$ average similarity to both inputs. Figure \ref{fig:L2sims}(B) summarizes the relative occurrence of these outcomes across 455 pairs of simulations. These results show that the organizations produced by AlChemy can possibly coexist, but it rarely occurs for organizations evolved in different simulations. 

\section{Generating Random $\lambda$ Expressions}
An AlChemy run is initialized with random $\lambda$ expressions. The system can be sensitive to its initial conditions, and thus the distribution of random $\lambda$ expressions from which the initial conditions are sampled affects the output.  We describe and study the existing method of generating $\lambda$ expressions, used in the original AlChemy, and a second method that samples expressions more `uniformly.' The original method uses a probabilistic grammar as its core random object, and the second method uses a random binary tree. For AlChemy to produce nontrivial results, free variables must be removed from the generated expressions. This can be done by binding the free variables, a process called {\it standardization}, and the way expressions are standardized can have interesting dynamical consequences for the simulation. We study the consequences of two random expression generators empirically.

Generating {\it random} $\lambda$ expressions is not as simple as generating a random binary variable, or a random floating point variable. The probabilistic grammar approach (original AlChemy) leveraged the properties of $\lambda$ calculus. Imagine reading a $\lambda$ expression from left to right; there are three possible first characters, indicating application, abstraction, or variable. The original generator selects one randomly according to three probabilities $p_1$, $p_2$ and $p_3 = 1 - p_1 - p_2$ respectively. When the system generates an abstraction, the variable bound to that abstraction is set to be distinct from any variable previously bound by an ancestor abstraction. When the system generates a variable (with probability $p_4$) it is bound to a randomly chosen parent abstraction if one exists, and with probability $1 - p_4$, it is set to be a free variable. If application is selected, the system recursively generates two new expressions which can be applied to each other, except $p_1$ and $p_2$ change linearly with the depth of recursion. At a depth $d_{max}$, the probability of variable being selected is forced to 1, which terminates the recursion.  This process guarantees that the syntax tree for the generated expression never exceeds a given a depth.

Since there are many possible ways that expressions can be generated from a grammar, we wondered how much the choice of a random generation method affected the results.  We studied a simple alternative, which generates expressions that are distinct from those of the original Alchemy.  Our generator, which we refer to as the {\it Permutation} generator, relies on the observation that the abstract syntax tree of a $\lambda$ expression forms a binary tree. This can be seen by observing that the grammar rule containing the most non-terminal symbols is the application rule ($E \rightarrow (E).\ E$), with two non-terminal symbols.  This forms a tree with $E$ at the root and two leaves, each consisting of $E$.  The tree is formed by assigning the expression on the left hand side of the rule to the root, and assigning one leaf for each of expression on the right hand side of the rule.  A similar process is used, recursively, for sub-expressions that form subtrees.  Because there are never more than two expressions on the right hand side of a rule (and hence, two children for each node), the structures form a binary tree. Using this observation, we first generate a random binary tree, then assign variables to the abstraction and leaf vertices of the tree. We use a standard method for generating random binary trees~\cite{knott1977numbering}. 

Starting with a target number, $n$, of desired nodes, we first generate a random permutation of the first $n$ integers, then construct a binary search tree using the permutation. A binary search tree is simply a binary tree with integer vertex weights and a total order on the children of each vertex, such that the lesser (left) child always has a smaller weight than the parent, and the greater (right) child has a larger weight.~\cite{knuth1997art}. Once such a tree is constructed, the syntactic structure of the expression is determined, and it remains to randomize the semantic structure. To do this, we assign variables to abstraction (one-child) and leaf (zero-child) vertices. Application (two-child) vertices do not have a variable associated with them. As with the original generator, we uniquely assign variables within each abstraction chain, i.e., we set the variable bound by an abstraction to be distinct from any variable bound by an abstraction ancestor, if an ancestor exists. For leaf vertices, we assign a variable at random from a parent abstraction, unless no parent abstraction exists---if so, we assign it to a free variable.

\label{sec:random-exprs}
\begin{figure}[h!]
\centering
\centerline{\includegraphics[width=\textwidth]{"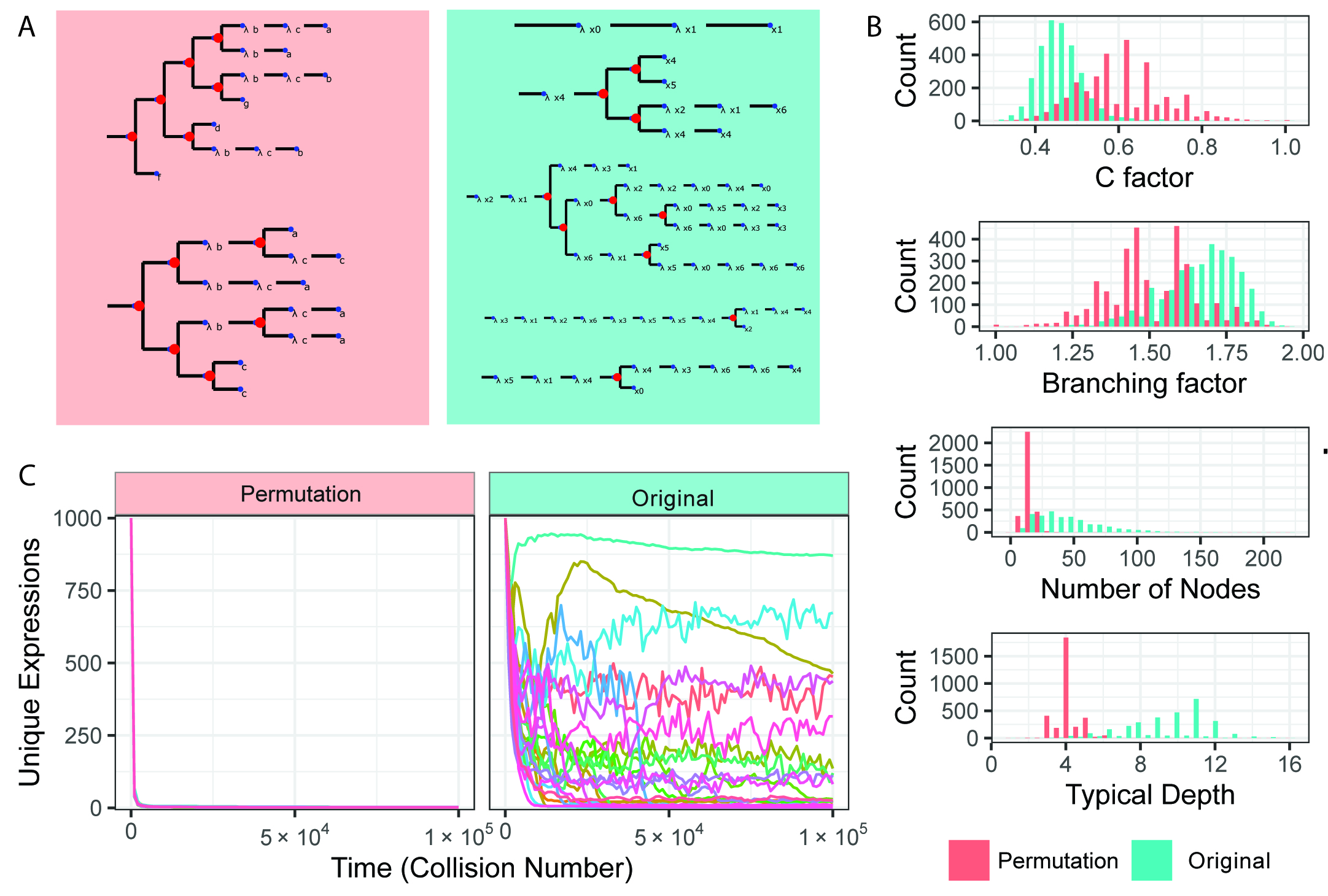"}}
\caption{(A) Representative syntax trees for $\lambda$ expressions generated by the two generators. Left (red): Two example trees generated by the permutation method. These trees are uniform in the ratio of abstractions to applications, and they are parameterized only by the size of the tree.
Right (blue): Five trees generated using Alchemy's original generator. These trees tend to have long chains of abstractions, seen here as long branches. Thus, their corresponding expressions vary greatly in complexity, and simple expressions (such as one or two-abstraction expressions) are common. (B) Statistical properties of the binary tree representation of the $\lambda$ expressions. (C) The dynamical consequences of these different generators are dramatic; in the permutation method (left) the system collapses to an inert, trivial fixed point, while the original generator (right) produces a diversity of complex organizations.  }

\label{fig:randomgenerators}
\end{figure}

\subsection{Standardization}
The process of generating random expressions can produce expressions with free variables. Using the original code, we found that the presence of expressions with free variables in the simulation usually leads to dominance of the term $\lambda x.\ c$, regardless of the syntactic filters. We infer from this that in the original work all free variables were bound (which was an option in the input file for simulation), and we followed that protocol here. The process of removing free variables from expressions is called``standardization," and there are multiple ways to implement it. 

A set of expressions is standardized if each expression in the set is in normal form and has no free variables. The original AlChemy generator standardized expressions as follows: For each free variable $x$ occurring in a generated expression, concatenate the string $\lambda x$ to the head of the expression, thereby binding the free $x$. This is iterated until each free variable in the expression is bound. This method tends to produce expressions with long linear chains emanating from the root of the syntax tree as shown in Figure~\ref{fig:randomgenerators}---in effect, producing expressions that describe functions with a large number of arguments. 

Our permutation generator standardized expressions using a different method, as follows: a free variable can only occur at a leaf vertex of a tree if that leaf has no single-child ancestor (an abstraction node). In this case, because the leaf node is not in the body of any abstraction, the assignment of the free variable is forced. To standardize such vertices, we modify the tree slightly, introducing an abstraction vertex immediately above the free-variable leaf. This binds the child vertex to the newly introduced parent.

\subsection{Dynamical Consequences}
\label{sec:dynamical-consequences}

What is the effect of the different expression generators on the simulation? Figure \ref{fig:randomgenerators}A shows exemplary trees from each generator (permutation in pink, original in blue). Using the binary tree representation of $\lambda$ expressions, we can calculate various properties of the expressions. These are shown in Figure \ref{fig:randomgenerators}B. Simple measures such as the number of nodes in the tree, or the typical depth (median distance between root and leaf) are given. We also calculated the {\it branching factor} which measures the average number of children each node has. Additionally we calculated a variable labeled {\it C factor}, which is the ratio between the maximum depth of the tree and the log of the number of nodes. This measures how ``bushy'' the trees are (if C factor is 1, trees are very bushy/wide, if C factor is 0, the tree is very narrow and stringy). The original generator has a much broader distribution over the number of nodes in the generated expressions, and a broader distribution of typical depths. The original generator, however, tends to produce long stringy expressions with lower C factors and higher branching factors than the permutation method.

To test whether the properties of the original random expression generator had consequences for the experimental results, we ran simulations using the both the original generator and the random permutation generator described above. Using each generator we produced five different random initial conditions, and for each of those conditions we ran the simulations using 5 different pseudo random number seeds. As above, each simulation contained 1000 unique expressions and ran for $10^5$ collision steps. We imposed the `L1' boundary conditions, preventing direct copy actions. The number of unique expressions for these runs is shown in Figure \ref{fig:randomgenerators}C. On the left are the number of unique expressions over time for the initial conditions generated using the permutation generator, while the right shows the results for the original generator. The differences are dramatic. Unsurprisingly, the original generator behaves consistently with the results in Figures \ref{fig:L0sims} and \ref{fig:L1sims}. However for the permutation generator the simulations rapidly collapse into a trivial fixed point containing only the identity function. This is surprising given that the in other case (such as Figure \ref{fig:L0sims}D), this fixed point was stable because of the copying action of the identity function.  However, here we prohibited copy functions, so these identity functions are stable by other means. We verified that when the identity function begins to dominate a run the vast majority of interactions are labeled elastic. So in these cases the identity function is being produced through the interaction of other expressions in the system, and once produced it is essentially inert, because at least 50\% or more of its interactions are elastic collisions. The ubiquitous production of the simplest function from the initial conditions is an unexpected consequence of the new generator described above. These simulations show how the results from the original work on AlChemy depend on the details of how the random expressions are generated, and we speculate that the different standardization approaches may be the key discriminating factor.

\section{$\lambda$ Expressions Simulate Chemical Reaction Networks (CRNs)}
\label{sec:simulation}

A sequence of chemical reactions can be thought of as repeated application of combinatorial substitution rules to sets of molecules (the reagents), producing a set of products, an approach that is formalized in CRNs~\cite{andersen2016software, feinberg2019foundations}. CRNs are used widely in models of the origin of life~\cite{oolen2023takes}, biochemical systems~\cite{endy2001modelling}, complex systems science~\cite{borisov2021two}, and dynamical systems theory~\cite{gambino2013turing}.  In this section we establish a formal relationship between CRNs and $\lambda$ calculus by showing that for any given CRN, the transformations it enables can be simulated by a set of $\lambda$ expressions and appropriate reductions. 

At first glance, a system based on $\lambda$ calculus seems very far removed from chemistry. The main mathematical tool used to study the dynamics of chemical systems are Chemical Reactions Networks (CRNs). Both CRNs and pure $\lambda$ calculus can simulate Turing machines, and therefore are Turing complete. This alone does not mean that $\lambda$ calculus is a reasonable model of chemistry, especially considering that the rules of AlChemy are more than the rules of $\lambda$ calculus alone: they also include rules about collisions and constraints due to pragmatic reduction. \textcolor{black}{Here we show that a given CRN's state transitions can be simulated in a system like AlChemy (albeit based on typed $\lambda$ calculus). Our proof is more direct than simply stating the computational equivalence of all Turing complete systems because it provides the construction. In particular, the construction identifies a $\lambda$ expression for each individual reagent in the CRN and uses the collision rules of AlChemy to replicate the state transitions of the CRN. 
Our construction relies on typed $\lambda$ calculus, and it does not address the likelihood that any given AlChemy run (with random collisions) will simulate a given CRN, the time dependent behavior of the CRN, or the reaction rates of the CRN. Instead, it shows that there exists a run that simulates the CRN's behavior. }%

Although easy to describe informally, and widely used in chemistry~\cite{feinberg2019foundations}, we next define CRNs formally to set up the proof that $\lambda$ expressions can simulate CRNs. A CRN is defined as a set of reaction rules, represented as a tuple of two sets, $(R, S)$, where $S$ is a finite universe of chemical reagent symbols and $R$ is a set of {\it reactions}. Let $A_i, B_j \in S$ be some (not necessarily distinct) reactant and product species. A {\it reaction} is a transformation rule on the chemical species with the following form:
\[
    R_i = A_1 + \dots + A_n \longrightarrow B_1 + \dots + B_m.
\]

The state of a CRN at time $t$ is denoted $\sigma(t)$ and is a multiset whose members are symbols in the set of possible states, $S$. A multiplicity of each symbol in $\sigma(t)$ is interpreted as the quantity of the reagent corresponding to that symbol that is available at time $t$. CRN rules are applied to $\sigma(t)$ to transform it to $\sigma(t+\tau)$. The transformation for each rule $R_i$ is defined as follows: if $A_1, \ldots A_n \in \sigma$, then $R(\sigma(t)) = \sigma(t+\tau) = \sigma(t) \cup \{B_1, \ldots B_m\} \setminus \{A_2, \ldots, A_n\}$. To reiterate, each member of the reagents, $A_i$, and each member of the products, $B_i$, can appear multiple times in each rule. The multiplicity of each member is updated accordingly. If multiple rules are applicable to $\sigma(t)$, then with uniform probability one the rules is selected randomly.

If a sequence of rules $R^\star = R_{i_1}, \ldots R_{i_k}$ is applied in succession to $\sigma(t)$ to produce $\sigma(t + k)$, we write $\sigma(t) \to^\star_R \sigma(t+k)$. If there exists a sequence of rules $R^\star$ that produces some $\sigma(k)$ from $\sigma(0)$ such that $x \in \sigma(k)$, then we say that the CRN produces product(s) $x$.

We now show by construction that there exists a simply typed $\lambda$ calculus system that can simulate arbitrary sequences of state transitions of any given CRN. This is a refinement of AlChemy, since AlChemy uses untyped-$\lambda$ calculus. In a system of typed $\lambda$ expressions, collisions occur only when the expressions have compatible types, i.e., reactfions can only occur between certain expressions. In the simply typed $\lambda$ calculus, each expression is associated with a discrete {\it type}, and the types restrict the reductions allowed by untyped application. This is similar to the use of types in modern typed computer programming languages, where function evaluation requires compatible types. We introduce types to guard against stray reactions (e.g., between species whose reactions are not specified by the CRN). Interestingly in chemical systems no such formal guard exists, electrons do not have prior knowledge of which reactions they are or are not allowed to participate in. The typed $\lambda$ calculus is a computationally weaker structure than untyped $\lambda$ calculus and can be simulated by it. Although we do not give a proof here, AlChemy with types is no more difficult to simulate than AlChemy without types.

Let $C$ be a CRN and $\rho$ an AlChemy system with types. Consider a sequence of reactions $R^\star = R_{i_1}, \ldots R_{i_k}$ applied to $C$ at time $t$. We show that there exists some transformation ${\sf A}: C \to \rho$ such that if there exists a sequence of reactions $R^\star$ on $\sigma(t)$ producing $\sigma(t + k)$, then there also exists a sequence of collisions $\beta^\star$ on ${\sf A}(\sigma(t))$ that results in ${\sf A}(\sigma(t+k))$  (figure~\ref{fig:sim-comm}). For this commutativity property to hold (Figure~\ref{fig:sim-comm}), we do not need to make claims about the number of $\beta$ reductions required in $\beta*$, or even that such a sequence of reductions is likely, only that it is possible. We next demonstrate by construction how each reaction rule can be implemented. Our construction introduces a new $\lambda$ term ($L_i)$ for each rule. 
    
\begin{figure}[h]
    \centering
    \begin{tikzcd}
    	{\sigma(t)} && {\sigma(t+k)} \\
    	\\
    	{\rho_1} && {\rho_2}
    	\arrow["\mathsf{A}", from=1-3, to=3-3]
    	\arrow["\mathsf{A}", from=1-1, to=3-1]
    	\arrow["{R^\star}"{description}, from=1-1, to=1-3]
    	\arrow["{\beta^\star}"{description}, from=3-1, to=3-3]
    \end{tikzcd}
   \caption{{\bf Correspondence between $\lambda$ calculus and Chemical Reaction Networks (CRNs).} $\sigma(i)$ are states of a CRN, and $\rho_i$ are states of a typed AlChemy system.} 
    \label{fig:sim-comm}
\end{figure}

The transformation $\textsf{A}$ applies to a CRN state $\sigma(t)$ to produce a new $\lambda$ expression ($L_i$) for each available rule ($R_i$) and makes the initial set of $\lambda$ expressions $\rho_1$ with these $L_i$s. Further, for each available chemical species with count $k$ in $C$,  $k$ new $\lambda$ expressions representing those species (e.g. the $A_i$, and $B_i$) are added to $\rho_1$. The expression $L_i$ acts like a catalyst, $L_i$ not destroyed by the repeated $\beta$-reductions, and all reductions involving $L_i$ consume a set of $\lambda$ expressions that correspond to species on the left-hand side of $R_i$, and produce a set of expressions that correspond to species on the right-hand side of $R_i$.

\begin{theorem}[Simulation]
    For all CRNs $C = (R, S)$, there exists a mapping $\mathsf{A}: C \to \rho$ such that if $\sigma(t) \to_R^\star \sigma(t+k)$, then there exists some sequence of collisions $\beta^\star$ such that $\rho_1 = \mathsf{A}(\sigma(t)) \to_\beta^\star \mathsf{A}(\sigma(t+k)) = \rho_2$.
\end{theorem}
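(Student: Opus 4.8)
The plan is to construct the mapping $\mathsf{A}$ explicitly and then verify the commuting square of Figure~\ref{fig:sim-comm} one reaction at a time, so that the general case follows by composing collision sequences. First I would fix, for each chemical species symbol $A \in S$, a distinct typed $\lambda$ expression (equivalently, a distinct ground type $\tau_A$ together with a canonical inhabitant $\bar A : \tau_A$); the multiset structure of $\sigma(t)$ is then represented faithfully by taking $k$ copies of $\bar A$ whenever $A$ has multiplicity $k$. For each reaction rule $R_i = A_1 + \dots + A_n \longrightarrow B_1 + \dots + B_m$ I would introduce one catalyst term $L_i$ whose type is $\tau_{A_1} \to \tau_{A_2} \to \dots \to \tau_{A_n} \to (\text{some product type})$, built so that when $L_i$ is successively applied to expressions representing $\bar A_1, \dots, \bar A_n$ it $\beta$-reduces to a normal form encoding the tuple $(\bar B_1, \dots, \bar B_m)$, paired with a fresh copy of $L_i$ itself (the catalyst-preservation clause). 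The transformation $\mathsf{A}(\sigma(t)) := \rho_1$ is then the soup containing all the $L_i$ for rules available at time $t$, plus the multiset of species-expressions. The collision rules of AlChemy are non-commutative application, so I would be careful to specify the order of collisions in $\beta^\star$: feed $L_i$ its arguments left-to-right, then let the one ``decode'' collision split the product tuple into its $m$ constituent expressions.

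The core lemma is the single-step claim: if rule $R_i$ is applicable to $\sigma(t)$, i.e. $A_1, \dots, A_n \in \sigma(t)$, then there is a finite sequence of collisions in the AlChemy system taking $\mathsf{A}(\sigma(t))$ to $\mathsf{A}(R_i(\sigma(t)))$. I would prove this by exhibiting the collision sequence directly: the $n$ applications $L_i\,\bar A_1$, then $(\dots)\bar A_2$, and so on consume one copy each of $\bar A_1,\dots,\bar A_n$ (here I use that AlChemy reactions are catalytic, $X+Y \to X+Y+Z$, together with the removal step, so I must argue the right species get diluted out — or, more cleanly, track multiplicities and note that over the subsequence the net effect on counts matches $R_i$'s stoichiometry), terminating with the product-tuple expression and a regenerated $L_i$; a final decoding collision replaces the tuple by the $m$ expressions $\bar B_1,\dots,\bar B_m$. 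Since every expression involved has a normal form (this is where simple typing is essential — strong normalization of the simply typed $\lambda$ calculus guarantees pragmatic reduction never fires), each collision is inelastic and produces exactly the intended product. One subtlety to handle: when the $B_j$ coincide with some $A_i$ or with each other, the multiplicity bookkeeping must still come out right; I would phrase the encoding of tuples so that decoding yields the correct multiset regardless of repetitions.

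With the single-step lemma in hand, the theorem follows by induction on the length $k$ of the rule sequence $R^\star = R_{i_1},\dots,R_{i_k}$. The base case $k=0$ is trivial ($\mathsf{A}(\sigma(t)) = \mathsf{A}(\sigma(t))$). For the inductive step, applicability of $R_{i_1}$ to $\sigma(t)$ means the species-expressions it needs are present in $\mathsf{A}(\sigma(t))$, and the catalyst $L_{i_1}$ is present because $R_{i_1}$ is a rule of $C$ (all $L_i$ are included in $\rho_1$ and never destroyed); the single-step lemma gives $\beta^\star_1$ with $\mathsf{A}(\sigma(t)) \to_\beta^\star \mathsf{A}(R_{i_1}(\sigma(t)))$, and the induction hypothesis applied to the remaining $k-1$ rules, starting from $R_{i_1}(\sigma(t))$, supplies the rest; concatenating the collision sequences gives $\beta^\star$.

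I expect the main obstacle to be the multiplicity/dilution bookkeeping forced by AlChemy's update rule (add the product, then delete a \emph{random} expression to keep the population fixed): a literal reading does not let us choose \emph{which} expression is removed, so strictly matching CRN stoichiometry collision-by-collision is delicate. The cleanest fix is to work with the idealized variable-population version of the collision dynamics (as the informal model description already does when it writes $A+B\to A+B+C$), or to absorb the deletions into the definition of $\mathsf{A}$ on states by tracking only relative multiplicities; I would state explicitly which convention is in force. A secondary, more routine obstacle is designing the tuple encoding/decoding $\lambda$ terms and their types so that (i) all reductions terminate and (ii) no stray well-typed collisions are possible between a species-expression and anything other than the catalyst expecting it — here I would give each $A_i$ a type used nowhere else except as an argument slot of the relevant $L_i$, which is exactly the role of types flagged in the surrounding text.
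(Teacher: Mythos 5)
Your proposal is correct and follows essentially the same route as the paper's proof: one type and canonical inhabitant per species (with multiplicity as copy count), one curried catalyst term $L_i$ per reaction rule that consumes the reactant expressions and $\beta$-reduces to an encoding of the product list, a decoding phase to split that list back into individual species expressions (the paper uses the Church booleans $T=\lambda x.\lambda y.x$ and $F=\lambda x.\lambda y.y$ for this), and induction on the number of reaction steps. Your explicit appeals to strong normalization and your flagging of the random-removal bookkeeping are in fact slightly more careful than the paper, which simply assumes $L_i$ is never the randomly deleted expression and discards partial products.
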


\begin{proof}
    To construct $\mathsf{A}$, we must construct a set $\rho$ of typed $\lambda$ calculus expressions for every state $\sigma$ of the given CRN $C$. The transformation proceeds as follows: For each chemical symbol $x \in S$, we introduce the type $\tau_x$ into $\rho$. For each copy of a symbol $x$ in $\sigma$, we add a value $v_x$ of type $\tau_x$ to $\rho$ . The number of $v_x$ values equals the count of $x$ in $\sigma$. For each rule (reaction) $R_i$ in the CRN, we add a single copy of the $\lambda$ term $L_i$ to $\rho$. $L_i$ is constructed such that it accepts the variables corresponding to the left-hand side of $R_i$ as arguments, and contains in its body a list of the variables corresponding to the right-hand side of $R_i$ as output:
    \begin{equation*}
        L_i = \lambda v_{A_1}:\tau_{A_1}.~\lambda v_{A_2}:\tau_{A_2}.~\ldots~\lambda v_{A_n}:\tau_{A_n}.~(v_{B_1} (v_{B_2} (v_{B_3} \dots v_{B_m}) \dots )).
    \end{equation*}
    We also introduce into $\rho$ arbitrarily many copies of the $\lambda$ expressions $T = \lambda x.~\lambda y.~x$ and $F = \lambda x.~\lambda y.~y$. These expressions represent the boolean values {\it true} and {\it false} respectively.

    The proof proceeds by induction on $k$, the number of reaction steps the CRN required to transform $\sigma(t)$ to $\sigma(t+k)$. Let $\rho_0 = \mathsf{A}(\sigma(0))$. If $x \in \sigma(t)$, then there exists some $\rho_1$ such that $\rho_0 \to_\beta^\star \rho_1$ and $v_x \in \rho_1$. If there is a sequence of reactions ($R^\star$) that transforms $\sigma(t)$ to $\sigma(t+k)$, then we must show that there exists a sequence of reactions ($\beta^\star$) transforming $\rho_1$ to $\rho_2$.
    
    By induction, we assume that each of $v_{A_1} \ldots v_{A_n}$ are all in $\sigma(t)$. Then, an application of the rule $L_i$ to each $v_{A_i}$ in succession (discarding all partial products) produces the body of $L_i$. That is,
    \begin{equation*}
        (L_i~v_{A_1}~\ldots~v_{A_n}) \to_\beta^\star (v_{B_1} (v_{B_2} (v_{B_3} \dots v_{B_m}) \dots )).
    \end{equation*}
    In AlChemy, both reactants (in this case $L_i$) are returned to the population, in addition to the product and a random expression is removed. In our reduction we assume $L_i$ is not the random expression chosen to be removed, thereby maintaining the population of rules in the reaction set.
    
    Once the list $(v_{B_1} (v_{B_2} (v_{B_3} \dots v_{B_m}) \dots ))$ is produced, it remains to decompose the list into a desired set of outputs. Using the $\lambda$ expressions $T$ and $F$, we can successively apply these expressions to the list such that we produce each $v_{B_x}$ for any chosen $x$. Note that the application of $T$ produces the first element of the list, and the application of $F$ produces the remainder of the list, a partial product. Upon each application, the partial product is returned to the soup to generate the next value. The resulting set of expressions removes $v_{A_1} \ldots v_{A_n}$ from $\rho_1$, and adds $v_{B_1} \ldots v_{B_m}$ to $\rho_1$, thereby simulating one reaction step of the CRN. 
\end{proof}

This proof demonstrates that for {\it any} set of reaction rules, we can construct a set of $\lambda$ expressions, which when composed in a correct order, has the effect of `firing' a set of reaction rules (thereby converting a set of reactants to a set of products). \textcolor{black}{This means that a carefully selected set of $\lambda$ expressions can simulate the transition of any set of reactants under any sequence of reaction rules $R$ between two different points in time (i.e. simulating the sequence $\sigma(t) \to_R^\star \sigma(t + k)$).} Alternatively, a proof that uses the computational equivalence of Turing-complete systems would construct a single lambda expression to represent an entire CRN, thereby sidestepping the dynamics of chemical reactions as lambda-expression collisions altogether. Our proof here, does not make claims about whether or not those combinations of $\lambda$ expressions or $\beta$ reductions are likely---the expressions are not random, and the given sequence of $\beta$ reductions is only one out of many other possibilities if application and subsequent reduction happen in random order. Despite these caveats, the proof demonstrates a clear formal correspondence between CRNs and typed $\lambda$ calculus, and \textcolor{black}{it illustrates how the space of AlChemy simulations contains state transformations between states of any CRN.}

\section{Discussion}

\subsection{AlChemy as a model of chemistry}

\textcolor{black}{Section~\ref{sec:simulation} shows that systems of typed $\lambda$ calculus can represent arbitrary sequences of state transitions of CRNs, which establishes a mapping between the typed $\lambda$ calculus and CRNs. We do not demonstrate that the $\lambda$ calculus captures all the richness of chemistry, including energetic, thermodynamic and structural features; nor do we attempt to replicate the exact time dependent concentrations.
However, it provides an initial step towards such a proof by demonstrating that every reaction sequence that a CRN might produce can also be produced by a corresponding set of $\lambda$ expressions.} The proof relies on typed, rather than un-typed, $\lambda$ calculus, because with un-typed $\lambda$ calculus, it would be challenging to prevent undesirable reactions (or unanticipated) reactions from occurring. The type system provides a shortcut for deciding \textit{a priori} what is and is not possible. 

It is interesting that synthetic chemists have to solve this problem manually without invoking a type system, and instead create a new one. They do this by using pure reagents, and then identifying compounds with limited and specific reactivity which guide their compounds through specific desired reaction rules, and no others. Thus, CRNs themselves are a refined mathematical abstraction, which can be used to make predictions about the systems that synthetic chemists (or other biological entities) have made. They include only pre-specified reactions and rate constants, which can be used to derive the dynamical properties of the system. Rate constants and reaction rules can reflect energetic, or thermodynamic, considerations, but they are additional features layered on top of the original CRN model and are not endogenous to it.

It may be possible to achieve similar features in an AlChemy like system. For example, differential reaction rates could be incorporated using typed $\lambda$ calculus by introducing {\it solvent} expressions, which can only bind and unbind to specific reactants, thereby modulating what they can react with. By using a diversity of different solvent expressions (which need not correspond to different solvents), with different types, it would be possible to modulate the relative rates of reactions. From a chemical perspective this would be analogous to solvation effects. Driving the analogy further, it might be possible to draw a correspondence between reaction progress and the number of $\beta$ reductions required to reach normal form, with very long reduction processes implying elaborate molecular transformations. Simulating chemical systems involves a diversity of different modeling approaches itself, from first-principles quantum chemical calculations, thermodynamic models, to spatially embedded reaction diffusion systems. It seems unlikely that a single $\lambda$ calculus model would subsume all of these approaches elegantly and simultaneously.

\subsection{AlChemy as a model of computation}

An appealing feature of {\textit AlChemy} is the emergence of complex stable organizations. These organizations take different forms, often consist of different individual expressions, and manifest different relational architectures. All of this emerges from random $\lambda$ expressions, which can be understood as random computational functions. A natural question to ask is whether a bottom-up process like AlChemy can generate anything useful, e.g., perform computations that a software engineer would find interesting. Similarly, if AlChemy were seeded with functions known to be useful in other contexts, what would emerge?  For example, program synthesis is an active subfield of computer science, and it would be interesting to ask how an AlChemy-like system would behave if it were seeded with functions written in a modern functional programming language like Haskell and whether it might produce (synthesize) useful new functionality.  Haskell is a natural choice for such a project, because it has a large repertoire of existing functions that could be used to populate the soup.  A bottom-up randomly-driven synthesis process for code would be a radical departure from existing practice and might or might not produce compelling results.  Another approach might involve combining AlChemy's bottom-up behavior with evolutionary computation, whether in the realm of software improvement or some other domain.  Such an approach would be reminiscent of novelty search~\cite{novelty-search} and could help address well-known problems in evolutionary computation that arise with objective fitness functions and lead to premature convergence.  If randomly assembled functions from a simulation like AlChemy can be searched and their performance measured against data (say, test cases), it could provide a new approach for many problems and would represent an extension of bio-inspired computing to include inspiration from chemical systems, particularly the chemical processes that led to life on Earth.

\subsection{Origin of Life and Artificial Life}

How might one detect evolution via selection in AlChemy? Our analysis of L2 organizations shows that interactions between L1 organizations are non-trivial, often generating dynamical properties that do not admit simple analysis. Recently, Assembly theory~\cite{marshall2021identifying, sharma2023assembly} has been proposed for studying selection in non-biological systems, particularly in vast combinatorial systems, such as chemistry. Assembly theory is specific enough to make predictions about empirical systems, but general enough to be useful as a theoretical framework, and it could be useful for characterizing selection in AlChemy. By tracking the Assembly Index and the Assembly Space of $\lambda$ expressions through time in simulations, one might be able to detect selection and characterize the functional motifs driving that selection. This requires understanding $\lambda$ expressions as composite objects which can be decomposed into elementary building blocks. In the case of molecules, this is relatively straightforward, and doing so has provided an experimental technique for detecting life and a method for generating novel drug-like compounds for drug discovery~\cite{marshall2021identifying,liu2021exploring}. Mapping the assembly space of expressions in AlChemy simulations could provide a way to characterize the accessibility of this function space. 

Our results using different random expression generators illustrate an important point about the emergence of stable organizations: the richness of the organization depends on the accessibility of the space of objects. In our simulations the space of possible objects is completely determined by the initial conditions and those initial conditions are determined by the properties of the random expression generator. When we used the permutation generator for initial conditions (Section~\ref{sec:random-exprs}), the simulation produced $\lambda$ expressions that were {\it too} reactive to form stable organizations. Expressions interacted rapidly, those reactions generated a dense network of other reactions, so the system could explore the entire state-space and thus collapsed a trivial fixed point. The stability of this fixed point was not an autocatalytic property, but rather it arose because the expressions it contained were inert. With these initial conditions, the sequences of reactions between objects had many possible routes to produce an inert object that could not react further. Meanwhile, original expression generators led to much greater diversity of organizations, and it was rare for the simulations to end in the trivial fixed point (Figure \ref{fig:L0sims}). In this case, although the expressions could react with each other, the reactions rarely produced inert expressions. This meant that future reactions were always possible, ensuring the dynamic stability of the organization. These two types of stability (inert fixed point, and dynamic autocatalytic stability) can be analogized to thermodynamic stability, and the concept of ``dynamic-kinetic stability''\cite{pross2011toward}. This implies that a simple way to drive the spontaneous emergence of dynamically stable organizations is by designing systems that inhibit a direct approach to inert states. Interestingly, the organic chemistry of life on Earth was possible because organic carbon can ``get stuck" between the relatively inert redox states of carbon dioxide and methane \cite{shock2015principles, falkowski2008microbial, smith2004universality}. An interesting question, for both AlChemy and organic chemistry, is the extent to which Assembly theory can determine the connectivity of the space of allowable transformations. AlChemy would represent a useful test case for evaluating whether one can predict the accessible futures of a constructive system from the Assembly space of its initial conditions alone.

\section{Conclusion}

Although AlChemy was successful in many ways, the origin of life is an still unresolved question. Yet, experimental progress towards understanding the chemical origins of life is demonstrating that simple, unconstrained and nearly random systems can be guided towards diverse, yet constrained, product spaces~\cite{robinson2022environmental, muchowska2020nonenzymatic, surman2019environmental, asche2021robotic}. As experimental access to the full richness of these systems improves, the principles underlying the emergence of life-like phenomena in chemistry will become clearer~\cite{cronin2016beyond, jirasek2023multimodal}. New generations of scientists will seek to harness those principles for design and engineering aspirations, as we witnessed with the principles underlying biology~\cite{preiner2020future, miikkulainen2021biological}. Simulations like AlChemy can serve as a computational testbed for those principles, enabling proof of concept simulations which can guide new experimental paradigms and suggest targets for analysis to track life-like features of physical systems.

\section*{}
\subsection*{Acknowledgments}
CM and DP would like to thank Dr. Doug Moore for his help compiling the original AlChemy software, and Dr. Harrison B Smith, Veronica Mierzejewski, and Gage Siebert for their critical comments on an earlier version of this manuscript.  SF acknowledges the partial support of NSF (CCF2211750, CICI 2115075), DARPA (FA8750-19C-0003, N6600120C4020), ARPA-H (SP4701-23-C-0074), and the Santa Fe Institute.

\subsection*{Author Contributions}
All authors contributed to equally to project design, evaluation of
results, and preparation of the manuscript.  CM and DP developed the
version of AlChemy that runs in modern computing environments; CM ran
simulations; and DP performed theoretical analyses. CM and SF coordinated the research effort, and SF provided research funding. 

\bibliographystyle{unsrt} 
\bibliography{ref}

\begin{thebibliography}{10}

\bibitem{fontana1993beyond}
Walter Fontana, Gunter Wagner, and Leo~W Buss.
\newblock Beyond digital naturalism.
\newblock {\em Artificial life}, 1(1\_2):211--227, 1993.

\bibitem{fontana1994arrival}
Walter Fontana and Leo~W Buss.
\newblock “the arrival of the fittest”: Toward a theory of biological organization.
\newblock {\em Bulletin of Mathematical Biology}, 56:1--64, 1994.

\bibitem{fontana1994would}
Walter Fontana and Leo~W Buss.
\newblock What would be conserved if" the tape were played twice"?
\newblock {\em Proceedings of the National Academy of Sciences}, 91(2):757--761, 1994.

\bibitem{dittrich2001artificial}
Peter Dittrich, Jens Ziegler, and Wolfgang Banzhaf.
\newblock Artificial chemistries—a review.
\newblock {\em Artificial life}, 7(3):225--275, 2001.

\bibitem{kauffman1993origins}
Stuart~A Kauffman.
\newblock {\em The origins of order: Self-organization and selection in evolution}.
\newblock Oxford University Press, USA, 1993.

\bibitem{fontana1996barrier}
Walter Fontana and Leo~W Buss.
\newblock The barrier of objects: From dynamical systems to bounded organizations.
\newblock 1996.

\bibitem{quinkert1996aspects}
Gerhard Quinkert, Ernst Egert, and Christian Griesinger.
\newblock {\em Aspects of organic chemistry: structure}, volume~1.
\newblock John Wiley \& Sons, 1996.

\bibitem{cederman2002endogenizing}
Lars-Erik Cederman.
\newblock Endogenizing geopolitical boundaries with agent-based modeling.
\newblock {\em Proceedings of the National Academy of Sciences}, 99(suppl\_3):7296--7303, 2002.

\bibitem{wagner1996perspective}
G{\"u}nter~P Wagner and Lee Altenberg.
\newblock Perspective: complex adaptations and the evolution of evolvability.
\newblock {\em Evolution}, 50(3):967--976, 1996.

\bibitem{szathmary1995classification}
E{\"o}rs Szathm{\'a}ry.
\newblock A classification of replicators and lambda-calculus models of biological organization.
\newblock {\em Proceedings of the Royal Society of London. Series B: Biological Sciences}, 260(1359):279--286, 1995.

\bibitem{krakauer2011challenges}
David~C Krakauer, James~P Collins, Douglas Erwin, Jessica~C Flack, Walter Fontana, Manfred~D Laubichler, Sonja~J Prohaska, Geoffrey~B West, and Peter~F Stadler.
\newblock The challenges and scope of theoretical biology.
\newblock {\em Journal of theoretical biology}, 276(1):269--276, 2011.

\bibitem{alchemy_site}
Software related to the alchemy project.
\newblock \url{https://sites.santafe.edu/~walter/AlChemy/software.html}.
\newblock Accessed: January 26, 2024.

\bibitem{szathmary1995major}
E{\"o}rs Szathm{\'a}ry and John~Maynard Smith.
\newblock The major evolutionary transitions.
\newblock {\em Nature}, 374(6519):227--232, 1995.

\bibitem{smith2004universality}
Eric Smith and Harold~J Morowitz.
\newblock Universality in intermediary metabolism.
\newblock {\em Proceedings of the National Academy of Sciences}, 101(36):13168--13173, 2004.

\bibitem{church1985calculi}
Alonzo Church.
\newblock {\em The calculi of lambda-conversion}.
\newblock Number~6. Princeton University Press, 1985.

\bibitem{barendregt1984lambda}
Hendrik~P Barendregt et~al.
\newblock {\em The lambda calculus}, volume~3.
\newblock North-Holland Amsterdam, 1984.

\bibitem{sharma2023assembly}
Abhishek Sharma, D{\'a}niel Cz{\'e}gel, Michael Lachmann, Christopher~P Kempes, Sara~I Walker, and Leroy Cronin.
\newblock Assembly theory explains and quantifies selection and evolution.
\newblock {\em Nature}, 622(7982):321--328, 2023.

\bibitem{di2000less}
P~Speroni di~Fenizio and W~Banzhaf.
\newblock A less abstract artificial chemistry.
\newblock {\em Artificial Life}, 7:49--53, 2000.

\bibitem{virgoopen}
Nathaniel Virgo.
\newblock Open-endedness and thermodynamic reversibility in algebraic chemistry.

\bibitem{kruszewski2022emergence}
Germ{\'a}n Kruszewski and Tom{\'a}{\v{s}} Mikolov.
\newblock Emergence of self-reproducing metabolisms as recursive algorithms in an artificial chemistry.
\newblock {\em Artificial Life}, pages 1--23, 2022.

\bibitem{masumoto2001lambda}
Gen Masumoto and Takashi Ikegami.
\newblock The $\lambda$-game system: An approach to a meta-game.
\newblock In {\em European Conference on Artificial Life}, pages 695--699. Springer, 2001.

\bibitem{boutillier2018kappa}
Pierre Boutillier, Mutaamba Maasha, Xing Li, H{\'e}ctor~F Medina-Abarca, Jean Krivine, J{\'e}r{\^o}me Feret, Ioana Cristescu, Angus~G Forbes, and Walter Fontana.
\newblock The kappa platform for rule-based modeling.
\newblock {\em Bioinformatics}, 34(13):i583--i592, 2018.

\bibitem{dougs_camllight}
\url{https://git.sr.ht/~dglmoore/camllight}.

\bibitem{our_github}
\url{https://github.com/mathis-group/AlChemy}.
\newblock Updated: May 28, 2024.

\bibitem{surman2019environmental}
Andrew~J Surman, Marc Rodriguez-Garcia, Yousef~M Abul-Haija, Geoffrey~JT Cooper, Piotr~S Gromski, Rebecca Turk-MacLeod, Margaret Mullin, Cole Mathis, Sara~I Walker, and Leroy Cronin.
\newblock Environmental control programs the emergence of distinct functional ensembles from unconstrained chemical reactions.
\newblock {\em Proceedings of the National Academy of Sciences}, 116(12):5387--5392, 2019.

\bibitem{cronin2016beyond}
Leroy Cronin and Sara~Imari Walker.
\newblock Beyond prebiotic chemistry.
\newblock {\em Science}, 352(6290):1174--1175, 2016.

\bibitem{levandowsky1971distance}
Michael Levandowsky and David Winter.
\newblock Distance between sets.
\newblock {\em Nature}, 234(5323):34--35, 1971.

\bibitem{maity2019chemically}
Indrajit Maity, Nathaniel Wagner, Rakesh Mukherjee, Dharm Dev, Enrique Peacock-Lopez, Rivka Cohen-Luria, and Gonen Ashkenasy.
\newblock A chemically fueled non-enzymatic bistable network.
\newblock {\em Nature Communications}, 10(1):4636, 2019.

\bibitem{semenov2016autocatalytic}
Sergey~N Semenov, Lewis~J Kraft, Alar Ainla, Mengxia Zhao, Mostafa Baghbanzadeh, Victoria~E Campbell, Kyungtae Kang, Jerome~M Fox, and George~M Whitesides.
\newblock Autocatalytic, bistable, oscillatory networks of biologically relevant organic reactions.
\newblock {\em Nature}, 537(7622):656--660, 2016.

\bibitem{ellner1995chaos}
Stephen Ellner and Peter Turchin.
\newblock Chaos in a noisy world: new methods and evidence from time-series analysis.
\newblock {\em The American Naturalist}, 145(3):343--375, 1995.

\bibitem{knott1977numbering}
Gary~D Knott.
\newblock A numbering system for binary trees.
\newblock {\em Communications of the ACM}, 20(2):113--115, 1977.

\bibitem{knuth1997art}
Donald~E Knuth.
\newblock {\em The Art of Computer Programming: Fundamental Algorithms, volume 1}.
\newblock Addison-Wesley Professional, 1997.

\bibitem{andersen2016software}
Jakob~L Andersen, Christoph Flamm, Daniel Merkle, and Peter~F Stadler.
\newblock A software package for chemically inspired graph transformation.
\newblock In {\em Graph Transformation: 9th International Conference, ICGT 2016, in Memory of Hartmut Ehrig, Held as Part of STAF 2016, Vienna, Austria, July 5-6, 2016, Proceedings 9}, pages 73--88. Springer, 2016.

\bibitem{feinberg2019foundations}
Martin Feinberg.
\newblock Foundations of chemical reaction network theory.
\newblock 2019.

\bibitem{oolen2023takes}
OoLEN, Silke Asche, Carla Bautista, David Boulesteix, Alexandre Champagne-Ruel, Cole Mathis, Omer Markovitch, Zhen Peng, Alyssa Adams, Avinash~Vicholous Dass, Arnaud Buch, et~al.
\newblock What it takes to solve the origin (s) of life: An integrated review of techniques.
\newblock {\em arXiv preprint arXiv:2308.11665}, 2023.

\bibitem{endy2001modelling}
Drew Endy and Roger Brent.
\newblock Modelling cellular behaviour.
\newblock {\em Nature}, 409(6818):391--395, 2001.

\bibitem{borisov2021two}
Milen Borisov and Svetoslav Markov.
\newblock The two-step exponential decay reaction network: analysis of the solutions and relation to epidemiological sir models with logistic and gompertz type infection contact patterns.
\newblock {\em Journal of Mathematical Chemistry}, 59:1283--1315, 2021.

\bibitem{gambino2013turing}
G~Gambino, MC~Lombardo, Mml Sammartino, and V~Sciacca.
\newblock Turing pattern formation in the brusselator system with nonlinear diffusion.
\newblock {\em Physical Review E}, 88(4):042925, 2013.

\bibitem{novelty-search}
Joel Lehman and Kenneth~O. Stanley.
\newblock Abandoning objectives: Evolution through the search for novelty alone.
\newblock {\em Evolutionary Computation}, 19(2):189--223, 2011.

\bibitem{marshall2021identifying}
Stuart~M Marshall, Cole Mathis, Emma Carrick, Graham Keenan, Geoffrey~JT Cooper, Heather Graham, Matthew Craven, Piotr~S Gromski, Douglas~G Moore, Sara~I Walker, et~al.
\newblock Identifying molecules as biosignatures with assembly theory and mass spectrometry.
\newblock {\em Nature communications}, 12(1):3033, 2021.

\bibitem{liu2021exploring}
Yu~Liu, Cole Mathis, Micha{\l}~Dariusz Bajczyk, Stuart~M Marshall, Liam Wilbraham, and Leroy Cronin.
\newblock Exploring and mapping chemical space with molecular assembly trees.
\newblock {\em Science advances}, 7(39):eabj2465, 2021.

\bibitem{pross2011toward}
Addy Pross.
\newblock Toward a general theory of evolution: Extending darwinian theory to inanimate matter.
\newblock {\em Journal of Systems Chemistry}, 2:1--14, 2011.

\bibitem{shock2015principles}
Everett~L Shock and Eric~S Boyd.
\newblock Principles of geobiochemistry.
\newblock {\em Elements}, 11(6):395--401, 2015.

\bibitem{falkowski2008microbial}
Paul~G Falkowski, Tom Fenchel, and Edward~F Delong.
\newblock The microbial engines that drive earth's biogeochemical cycles.
\newblock {\em science}, 320(5879):1034--1039, 2008.

\bibitem{robinson2022environmental}
William~E Robinson, Elena Daines, Peer van Duppen, Thijs de~Jong, and Wilhelm~TS Huck.
\newblock Environmental conditions drive self-organization of reaction pathways in a prebiotic reaction network.
\newblock {\em Nature Chemistry}, 14(6):623--631, 2022.

\bibitem{muchowska2020nonenzymatic}
Kamila~B Muchowska, Sreejith~J Varma, and Joseph Moran.
\newblock Nonenzymatic metabolic reactions and life’s origins.
\newblock {\em Chemical Reviews}, 120(15):7708--7744, 2020.

\bibitem{asche2021robotic}
Silke Asche, Geoffrey~JT Cooper, Graham Keenan, Cole Mathis, and Leroy Cronin.
\newblock A robotic prebiotic chemist probes long term reactions of complexifying mixtures.
\newblock {\em Nature Communications}, 12(1):3547, 2021.

\bibitem{jirasek2023multimodal}
Michael Jirasek, Abhishek Sharma, Jessica~R Bame, Nicola Bell, Stuart~M Marshall, Cole Mathis, Alasdair Macleod, Geoffrey~JT Cooper, Marcel Swart, Rosa Mollfulleda, et~al.
\newblock Multimodal techniques for detecting alien life using assembly theory and spectroscopy.
\newblock {\em arXiv preprint arXiv:2302.13753}, 2023.

\bibitem{preiner2020future}
Martina Preiner, Silke Asche, Sidney Becker, Holly~C Betts, Adrien Boniface, Eloi Camprubi, Kuhan Chandru, Valentina Erastova, Sriram~G Garg, Nozair Khawaja, et~al.
\newblock The future of origin of life research: bridging decades-old divisions.
\newblock {\em Life}, 10(3):20, 2020.

\bibitem{miikkulainen2021biological}
Risto Miikkulainen and Stephanie Forrest.
\newblock A biological perspective on evolutionary computation.
\newblock {\em Nature Machine Intelligence}, 3(1):9--15, 2021.

\end{thebibliography}

\end{document}